
\def\isarxiv{1}

\def\paperTitle{Which Coauthor Should I Nominate in My 99 ICLR Submissions? A Mathematical Analysis of the ICLR 2026 Reciprocal Reviewer Nomination Policy}


\def\paperAuthor{
Zhao Song\thanks{\texttt{magic.linuxkde@gmail.com}. University of California, Berkeley.}
\and
Song Yue\thanks{\texttt{yuesong0630@gmail.com}. Northeastern University.}
\and
Jiahao Zhang\thanks{\texttt{ml.jiahaozhang02@gmail.com}. }
}



\ifdefined\isarxiv
\documentclass[11pt]{article}
\usepackage[numbers]{natbib}
\else
\documentclass{article}
\usepackage{cpal_2025}
\fi

\ifdefined\isarxiv

\usepackage{amsmath}
\usepackage{amsthm}
\usepackage{amssymb}
\usepackage{algorithm}
\usepackage{subfig}
\usepackage{algpseudocode}
\usepackage{graphicx}
\usepackage{grffile}
\usepackage{wrapfig,epsfig}
\usepackage{url}
\usepackage{xcolor}
\usepackage{epstopdf}

\usepackage{bbm}
\usepackage{dsfont}

\else 

\usepackage{url}

\fi

\ifdefined\isarxiv

\else
\usepackage{amsmath}
\usepackage{amsthm}
\usepackage{amssymb}
\usepackage{algorithm}
\usepackage{subfig}
\usepackage{algpseudocode}
\usepackage{graphicx}
\usepackage{grffile}
\usepackage{wrapfig,epsfig}
\usepackage{url}
\usepackage{xcolor}
\usepackage{epstopdf}

\usepackage{bbm}
\usepackage{dsfont}

\usepackage{hyperref}
\fi
 
\allowdisplaybreaks

\ifdefined\isarxiv

\usepackage{tikz}
\usepackage{hyperref}  
\hypersetup{colorlinks=true,citecolor=blue,linkcolor=blue} 
\usetikzlibrary{arrows}
\usepackage[margin=1in]{geometry}
\fi
 
\graphicspath{{./figs/}}

\theoremstyle{plain}
\newtheorem{theorem}{Theorem}[section]
\newtheorem{lemma}[theorem]{Lemma}
\newtheorem{definition}[theorem]{Definition}

\newtheorem{proposition}[theorem]{Proposition}

\newtheorem{fact}[theorem]{Fact}
\newtheorem{remark}[theorem]{Remark}

\newtheorem{example}[theorem]{Example}


\renewcommand{\cite}{\citep}

\newtheorem{question}{Question}

\newcommand{\wt}{\widetilde}

\newcommand{\R}{\mathbb{R}}

\DeclareMathOperator*{\Z}{\mathbb{Z}}

\DeclareMathOperator{\OPT}{OPT}

\DeclareMathOperator{\nnz}{nnz}

\begin{document}

\ifdefined\isarxiv

\date{}
\title{\paperTitle}
\author{\paperAuthor}

\else

\title{\paperTitle}
\author{\paperAuthor}

\fi

\ifdefined\isarxiv
\begin{titlepage}
  \maketitle
  \begin{abstract}
    The rapid growth of AI conference submissions has created an overwhelming reviewing burden. To alleviate this, recent venues such as ICLR 2026 introduced a reviewer nomination policy: each submission must nominate one of its authors as a reviewer, and any paper nominating an irresponsible reviewer is desk-rejected.
We study this new policy from the perspective of author welfare. Assuming each author carries a probability of being irresponsible, we ask: how can authors (or automated systems) nominate reviewers to minimize the risk of desk rejections?
We formalize and analyze three variants of the desk-rejection risk minimization problem. The basic problem, which minimizes expected desk rejections, is solved optimally by a simple greedy algorithm. We then introduce hard and soft nomination limit variants that constrain how many papers may nominate the same author, preventing widespread failures if one author is irresponsible. These formulations connect to classical optimization frameworks, including minimum-cost flow and linear programming, allowing us to design efficient, principled nomination strategies. Our results provide the first theoretical study for reviewer nomination policies, offering both conceptual insights and practical directions for authors to wisely choose which co-author should serve as the nominated reciprocal reviewer.

  \end{abstract}
  \thispagestyle{empty}
\end{titlepage}

{\hypersetup{linkcolor=black}
\tableofcontents
}
\newpage

\else

\maketitle

\begin{abstract}

\end{abstract}

\fi



\section{Introduction}

Artificial Intelligence (AI) has developed at an unprecedented speed and has been applied on an unprecedented scale. A key driving force behind this rapid progress is the role of top AI conferences, which are held annually and have presented countless groundbreaking works, many of the most influential papers of the 21st century. 
For example, AlexNet was presented at NeurIPS 2012 \cite{ksh12}, the Adam optimizer at ICLR 2015 \cite{kb15}, ResNet at CVPR 2016 \cite{hzrs16}, and Transformers at NeurIPS 2017 \cite{vsp+17}. These breakthroughs have made AI conferences essential engines of scientific discovery in AI, making them highly impactful and globally important.

However, despite their impact, the dramatic growth in submissions to these conferences has raised serious concerns about the overwhelming reviewing workload \cite{cll+25, lsz25, ax25, gsz25}. 
To cope with this challenge, some conferences have begun exploring new policies that distribute reviewing responsibilities by making authors serve as mandatory reviewers \cite{iclr_2025, icml_2025, kdd_2026}.  
A notable recent example is the reviewer nomination policy introduced at ICLR 2026 (Figure~\ref{fig:policy}), which requires each submitted paper to nominate one of its authors as a reviewer \cite{iclr_2026}. If a nominated reviewer is later judged as an irresponsible reviewer, the paper that nominated them is desk-rejected. While this mechanism aims to encourage responsible reviewing, it also raises new concerns about how the nomination choices of authors can directly affect the fate of their submitted papers.

In this work, we study this timely and important problem from the perspective of author welfare. We aim to answer the following research question:

\begin{question}
Under reviewer nomination policies (e.g., ICLR 2026), how can an author nominate reviewers in their submissions wisely to minimize the risk of desk rejections caused by coauthors’ irresponsible reviews?
\end{question}

\begin{figure}[!ht]
    \centering
    \vskip -0.1in
    \includegraphics[width=0.985\textwidth]{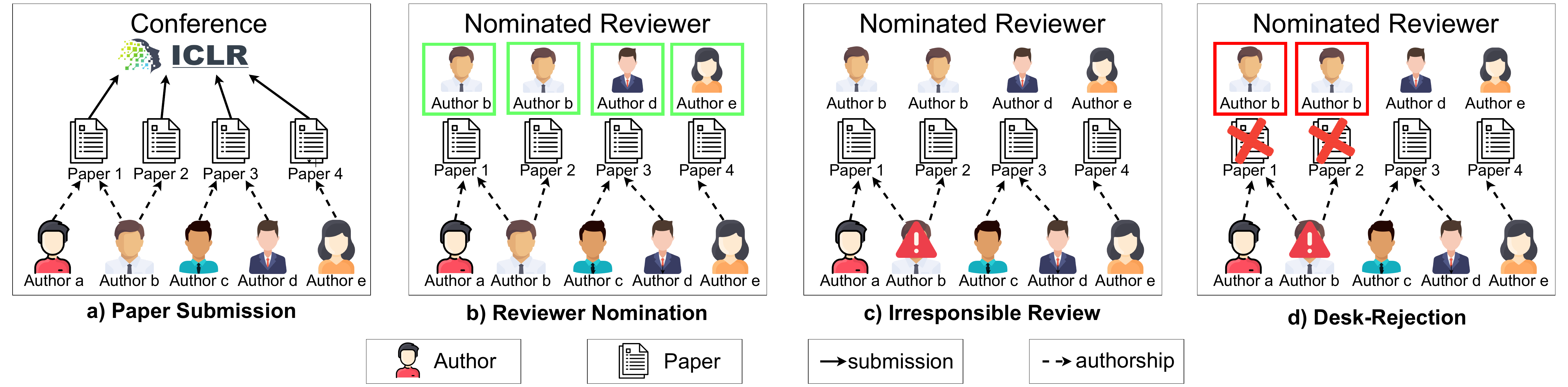}
    \caption{\textbf{Irresponsible-revew-related desk-rejection.}}
    \vskip -0.25in
    \label{fig:policy}
\end{figure}

Specifically, we assume that each author has a certain probability of being irresponsible, and thus potentially causing desk rejections for any papers nominating them. To investigate this, we formalize three core problems. The first (Definition~\ref{def:orig_problem}) focuses purely on minimizing expected desk rejections and can be solved optimally by a simple greedy algorithm. The second and the third (Definition~\ref{def:hard_problem} and Definition~\ref{def:soft_prob}) introduce a constraint on how many papers may nominate the same author, considering worst-case performance by preventing any single author’s irresponsibility from affecting too many papers. Our formulations provide principled algorithms for reviewer nomination, enabling strategies that significantly improve author welfare in this new setting.

Our contributions are summarized as follows:
\begin{itemize}
\item \textbf{Novel Problem Formulations.} We introduce and formalize three variants of the desk-rejection risk minimization problem, capturing how reviewer nomination strategies interact with desk-rejection risks under new conference policies. Specifically, we study: (i) the basic problem without limits (Definition~\ref{def:orig_problem}), (ii) the problem with hard per-author nomination limits (Definition~\ref{def:hard_problem}), and (iii) the problem with soft per-author nomination limits (Definition~\ref{def:soft_prob}). These are the first formulations to analyze reviewer nomination policies systematically from the authors’ perspective.  

\item \textbf{Solution to the Original Problem (Definition~\ref{def:orig_problem}).} In Proposition~\ref{pro:orig_prob_greedy}, we show that the basic problem is separable across papers and can be solved optimally by a simple greedy algorithm in $O(\nnz(a))$ time, where $\nnz(a)$ is the number of author-paper incidences. This provides a clean baseline and highlights the limitations of only minimizing expected risks.  

\item \textbf{Solution to the Hard Nomination Limit Problem (Definition~\ref{def:hard_problem}).} We prove that the hard limit formulation cannot be solved by naive greedy or random strategies, and that its LP relaxation may yield fractional solutions. Importantly, in Theorem~\ref{thm:hard_equivalence_mcf}, we then establish an exact equivalence between this problem and the classical minimum-cost flow problem, ensuring access to modern minimum-cost flow algorithms that has optimality guarantee and high efficiency. 

\item \textbf{Solution to the Soft Nomination Limit Problem (Definition~\ref{def:soft_prob}).} We show that the soft limit formulation is convex but non-smooth, making direct optimization difficult. To address this, in Theorem~\ref{thm:convex_lp_equi}, we design an equivalent linear program via auxiliary variables, enabling the use of efficient LP solvers. We further provide a rounding scheme that converts fractional LP solutions into integral assignments while preserving feasibility.  

\end{itemize}

{\bf Roadmap.} 
In Section~\ref{sec:related}, we review the relevant works of this paper. In Section~\ref{sec:prob_formulation}, we formulate three varuiants of the desk-rejection risk minimization problem. In Section~\ref{sec:proposed}, we present our main results. In Section~\ref{sec:conclusion}, we conclude our paper. 
\section{Related Works}\label{sec:related}

{\bf Optimization-based reviewer assignment.} 
The rapid growth in submissions to major conferences has brought increasing attention from the computer science community to improving reviewer assignment, addressing challenges such as bias~\cite{tzh17, sss19_neurips, sss19_alt}, miscalibration~\cite{fsg+10, rrs11, gwg13, ws19}, personalism~\cite{nsp21}, conflicts of interest~\cite{bgh16, xzss19}, and reviewer-author interactions~\cite{myu19}. To address these challenges, one line of work focuses on bidding mechanisms~\cite{stm+18, fsr20}, which allow reviewers to express preferences over papers. 

Another central approach is optimization, which formalizes reviewer assignment as an optimization problem with objectives such as maximizing similarity scores~\cite{cgk+05, lwpy13, lh16, aya23}, improving topic coverage~\cite{kz09, tssk+14, kkb+15}, or enhancing fairness~\cite{gkk+10, ksm19, pz22}. Beyond these objectives, specialized strategies have been proposed for preventing cycles and loops~\cite{gwc+18, l21, lnz+24}, avoiding torpedo reviewing~\cite{afpt11, jzl+20, djks22}, and handling conflicts of interest~\cite{mc04, yjg19, pccn20}. Other work studies problem formulations, such as grouping strategies~\cite{xms+10, wzs13, dg14} and two-stage reviewing~\cite{lnz+24, jzl+22}. While these approaches improve aspects such as expertise matching and fairness, in this paper we study a different problem for the first time: reducing desk-rejection risk for authors caused by irresponsible reviewing, which addresses a new policy in leading AI conferences (e.g., ICLR 2026).

{\bf Desk-rejection policies.} 
Many different desk-rejection policies have been implemented to decrease the reviewer load in the peer-review process~\cite{as21}. Among these rules, the most widely used is rejecting papers that violate anonymity requirements~\cite{jawd02, ten18}, which is crucial for ensuring that reviewers from different institutions, career stages, and interests can provide unbiased evaluations. Another common policy targets duplicate and dual submissions~\cite{sto03, leo13}, helping reduce redundant reviewer effort. Plagiarism~\cite{kc23, er23} is also a key reason for desk-rejection, as it violates academic integrity, infringes on intellectual property, and undermines the credibility of scientific contributions.  

To address the rapid growth of AI conference submissions, additional desk-rejection policies have been developed~\cite{lnz+24}. For example, IJCAI 2020~\cite{ijcai_2020} and NeurIPS 2020~\cite{neurips_2020} adopted a fast-rejection method, allowing area chairs to make desk-rejection decisions based on a quick review of the abstract and main content. Although designed to reduce reviewer workload, this approach introduces unreliability and can misjudge promising work, leading to inappropriate desk-rejections, and thus is rarely adopted. 
A more widely used strategy is author-level submission limits, where papers are desk-rejected if they include authors who exceed a submission cap. Several conferences have implemented this submission-limit policy, and recent studies examine its fairness and methods for minimizing desk-rejection under this rule~\cite{cll+25, lsz25}. 
Recently, ICLR~\cite{iclr_2026} announced a new desk-rejection policy: a paper will be rejected if the nominated reviewer is deemed irresponsible. Similar policies can also be found in KDD~\cite{kdd_2026} and NeurIPS~\cite{neurips_2025}. Our work provides a thorough analysis of this new type of desk-rejection policy.

{\bf Linear Programming.} 
Linear programming has long been a cornerstone problem in computer science and mathematics, with early methods such as the Simplex algorithm~\cite{dan51} and the Ellipsoid method dating back to the 20th century~\cite{k80}. Recent progress in linear programming follows the line of interior-point methods~\cite{k84,s19,lsz+23}, and has been significantly advanced by the central path method to reach the current matrix multiplication time~\cite{blss20,jswz21,cls21,sy21}. These advancements in linear programming have been widely applied in many areas, including empirical risk minimization~\cite{lsz19,qszz23,bsy23}, regression models and SVMs~\cite{cls20,syyz23_linf,syz24,gsy25}, and neural tangent kernels~\cite{dhs+22,szz24}. In this work, our solution to the desk-rejection risk minimization problem involves solving linear programs, and our method is compatible with the latest advances in linear programming.

\section{Problem Formulation}\label{sec:prob_formulation}

In Section~\ref{sec:orig_prob}, we present the basic desk-rejection risk minimization problem. In Section~\ref{sec:hard_prob}, we extend it with a hard author nomination limit. In Section~\ref{sec:soft_prob}, we consider a soft author nomination limit extension of the basic problem.

{\bf Notations.} We use $[n]:=\{1,2,\ldots,n\}$ to denote a set of consecutive positive integers. We use $\Z$ to denote the set of all integers. Let $\nnz(A)$ be the number of non-zero entries in matrix $A$. We use $\boldsymbol{1}_n$ to denote the $n$-dimensional column vector with all entries equal to 1, and $\boldsymbol{0}_n$ to denote the $n$-dimensional column vector with all entries equal to 0.

\subsection{Desk-Rejection Risk Minimization} \label{sec:orig_prob}

As shown in Figure~\ref{fig:policy}, the ICLR 2026 policy requires each paper to nominate at least one of its authors as a reciprocal reviewer. If a nominated reviewer behaves irresponsibly, then every paper that nominated this author is desk-rejected. From an author's perspective, this introduces a strategic risk: when submitting multiple papers, the choice of which co-authors to nominate directly affects the probability that some papers will be desk-rejected. This risk is further amplified in recent years, as authors tend to submit more papers and many submissions (e.g., large-scale LLM papers) involve long author lists.  

This motivates the following desk-rejection risk minimization problem: authors must carefully choose their nominations across all papers to reduce the probability of desk-rejection caused by irresponsible reviewers. We model this as an integer program.

\begin{definition}[Desk-rejection risk minimization, integer program]\label{def:orig_problem}
Let $n$ denote the number of papers and $m$ the number of authors.  
Let $a \in \{0,1\}^{n\times m}$ be the authorship matrix, where $a_{i,j} = 1$ if paper $i$ includes author $j$, and $a_{i,j}=0$ otherwise.  
Each author $j \in [m]$ is associated with a probability $p_j \in [0,1]$ of being an irresponsible reviewer, with $p_j=1$ indicating the author is always irresponsible and $p_j=0$ indicating the author is always responsible.

The objective is to minimize the expected number of desk-rejected papers by solving:
\begin{align*} \min_{x \in \{0,1\}^{n\times m}} ~ & \sum_{i=1}^n \sum_{j=1}^m x_{i,j} p_j \\ 
\mathrm{~s.t.~} & \sum_{j=1}^m a_{i,j} x_{i,j} = 1, \forall i \in [n] 
\end{align*}
where $x_{i,j} = 1$ indicates that paper $i$ nominates author $j$ as its reciprocal reviewer.
\end{definition}

In the definition above, the constraint $\sum_{j=1}^m a_{i,j} x_{i,j} = 1$ ensures that each paper nominates exactly one reciprocal reviewer. Since nominating more authors only increases the risk of desk-rejection without benefit, this formulation naturally restricts each paper to a single nomination.  

\begin{remark}[Difference from reviewer assignment problems]
The desk-rejection risk minimization problem in Definition~\ref{def:orig_problem} fundamentally differs from prior work on reviewer-paper matching systems~\cite{yjg19,pccn20,fsr20}, which focus on expertise matching or conflict of interest avoidance. Our formulation instead takes the author’s perspective, aiming to minimize the risk that a nomination leads to desk-rejection due to irresponsible review behaviors.
\end{remark}

Despite its novelty and importance, this formulation only considers the expected number of desk-rejections, while ignoring worst-case risks (e.g., if the least-risk author still turns out irresponsible). Moreover, the problem is computationally trivial, as shown below.  

\begin{proposition}[Optimal greedy solution for Definition~\ref{def:orig_problem}, informal version of Proposition~\ref{pro:orig_prob_greedy_formal}]\label{pro:orig_prob_greedy}
There exists a greedy algorithm that, for each paper, selects the co-author with the smallest irresponsibility probability, and this algorithm solves the desk-rejection risk minimization problem in Definition~\ref{def:orig_problem} in $O(\nnz(a))$ time. 
\end{proposition}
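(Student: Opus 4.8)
The plan is to observe that the integer program in Definition~\ref{def:orig_problem} decomposes completely across papers, since both the objective and the constraints are separable: the objective $\sum_{i=1}^n \sum_{j=1}^m x_{i,j} p_j$ is a sum over $i$ of the per-paper term $\sum_{j=1}^m x_{i,j} p_j$, and the only constraint coupling the variables for a fixed $i$ is $\sum_{j=1}^m a_{i,j} x_{i,j} = 1$, which involves no other paper. Hence the global minimum equals the sum of the minima of $n$ independent subproblems, one per paper. For paper $i$, the subproblem is to choose a single author $j$ with $a_{i,j}=1$ so as to minimize $p_j$ (note that, after imposing $x_{i,j}\in\{0,1\}$ and $\sum_j a_{i,j}x_{i,j}=1$, the feasible points are exactly the indicator vectors of authors of paper $i$, and the objective contribution is precisely $p_j$ for the chosen $j$).

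First I would state this decomposition formally and conclude that the greedy rule — for each paper $i$, set $x_{i,j^\star}=1$ for $j^\star \in \arg\min_{j : a_{i,j}=1} p_j$ and $x_{i,j}=0$ otherwise — produces a feasible point that attains the per-paper optimum, and therefore the global optimum. Feasibility is immediate: exactly one author of paper $i$ is selected, so $\sum_j a_{i,j} x_{i,j} = 1$, and all entries are in $\{0,1\}$. (One should note that the problem is well-posed only if every paper has at least one author, i.e. each row of $a$ is nonzero, so that $\arg\min$ is over a nonempty set; this is a harmless standing assumption.)

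For the running time, I would describe the obvious single pass: iterate over the nonzero entries of $a$ grouped by row, and for each row track the minimum $p_j$ seen so far and the corresponding index. This touches each author-paper incidence exactly once, doing $O(1)$ work per incidence, for a total of $O(\nnz(a))$ time (assuming the nonzeros of $a$ are available in a standard sparse format such as compressed-row storage, and that probabilities and indices fit in a machine word). Writing out the output vector $x$ costs $O(\nnz(a))$ as well in sparse form, or $O(n)$ if we only list the chosen author per paper.

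I do not anticipate a genuine obstacle here — the result is elementary once separability is stated. The only point requiring a line of care is the argument that, \emph{within} a single paper's subproblem, no fractional or multi-author choice can do better than the best single author: relaxing $x_{i,j}\in\{0,1\}$ to $x_{i,j}\in[0,1]$ turns the per-paper problem into minimizing a linear function $\sum_j p_j x_{i,j}$ over the simplex $\{x : \sum_j a_{i,j} x_{i,j} = 1,\ x \ge 0,\ x_{i,j}=0 \text{ for } a_{i,j}=0\}$, whose minimum is attained at a vertex, i.e. a single author, with value $\min_{j:a_{i,j}=1} p_j$; since this lower bound is achieved by an integral point, the integrality constraint is not binding and greedy is optimal. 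This also incidentally shows the LP relaxation of Definition~\ref{def:orig_problem} is integral, which is worth remarking as contrast with the hard-limit variant discussed later. I would then assemble these pieces into the formal proof of Proposition~\ref{pro:orig_prob_greedy_formal}.
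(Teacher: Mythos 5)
Your proposal is correct and follows essentially the same route as the paper's proof: separability of the objective and constraints across papers, per-paper selection of a minimum-probability co-author, and a single pass over the nonzeros of $a$ for the $O(\nnz(a))$ bound. The only cosmetic slip is the claim that the feasible points for paper $i$ are \emph{exactly} the indicator vectors of its authors --- entries $x_{i,j}$ with $a_{i,j}=0$ are unconstrained by $\sum_j a_{i,j}x_{i,j}=1$ and could be set to $1$ while remaining feasible --- but since $p_j \ge 0$ every feasible row still contributes at least $\min_{j:a_{i,j}=1} p_j$, so the greedy output remains optimal and your argument (including the extra LP-integrality remark) goes through.
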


This problem is not only technically less challenging but leads to an undesirable worst-case scenario: if the author with the smallest $p_j$ turns out to be irresponsible, then a large number of papers may be rejected. This motivates the need for more robust problem formulations.

\subsection{Desk-Rejection Risk Minimization with Hard Author Nomination Limits} \label{sec:hard_prob}

While the basic problem in Section~\ref{sec:orig_prob} captures the expected risk of desk-rejection, it does not prevent cases where the same ``most reliable'' author is repeatedly nominated across too many papers. In practice, such a strategy is risky: if that author turns out to be irresponsible, then a large fraction of the submissions may be desk-rejected simultaneously. Thus, we introduce a stricter formulation that enforces a hard limit on the number of papers any single author can be nominated for.

\begin{definition}[Desk-rejection risk minimization with hard author nomination limits, integer program]\label{def:hard_problem}
Let $n$ denote the number of papers and $m$ the number of authors.  
Let $a \in \{0,1\}^{n\times m}$ be the authorship matrix, where $a_{i,j}=1$ if paper $i$ includes author $j$, and $a_{i,j}=0$ otherwise.  
Each author $j \in [m]$ is associated with an irresponsibility probability $p_j \in [0,1]$.  
Let $b \in \mathbb{N}_+$ denote the maximum number of papers for which an author can be nominated as reviewer.  

The objective is to minimize the expected number of desk-rejected papers by solving:
\begin{align*}
    \min_{x \in \{0,1\}^{n \times m}} ~ & \sum_{i=1}^n \sum_{j=1}^m x_{i,j} p_j \\
    \mathrm{~s.t.~} & \sum_{j=1}^m a_{i,j} x_{i,j} = 1, \forall i \in [n] \\
    & \sum_{i=1}^n a_{i,j}x_{i,j}\leq b, \forall j \in [m] 
\end{align*}
where $x_{i,j} = 1$ indicates that paper $i$ nominates author $j$ as its reciprocal reviewer.
\end{definition}

This problem strictly generalizes the formulation in Section~\ref{sec:orig_prob} and is technically more challenging: simple random or greedy algorithms no longer guarantee feasibility.

\begin{remark}[Degeneration case of the hard nomination limit problem]
When the nomination limit $b$ is sufficiently large (e.g., $b \geq n$), 
Definition~\ref{def:hard_problem} reduces to the original problem in 
Definition~\ref{def:orig_problem}. In this case, the hard limit never works 
and every feasible solution of the original problem remains feasible under 
the limit formulation.
\end{remark}

\begin{proposition}[Failure of random and greedy algorithms under hard nomination limits, informal version of Proposition~\ref{prop:hard_failure_formal}]\label{prop:hard_failure}
There exist instances of Definition~\ref{def:hard_problem} that has feasible solutions, 
while both the simple random algorithm (Algorithm~\ref{alg:simple_random}) and the greedy 
algorithm (Algorithm~\ref{alg:simple_greedy}) fail to return a feasible solution.
\end{proposition}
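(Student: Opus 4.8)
The plan is to exhibit one explicit ``star-shaped'' family of instances on which the per-paper greedy rule is forced into an infeasible assignment and the one-shot uniform random rule is infeasible with constant probability. Fix an integer $n \ge 2$, set $b = 1$, take $m = n+1$ authors indexed $0,1,\dots,n$, and let the authorship matrix be given by $a_{i,0}=1$ for all $i\in[n]$ and $a_{i,i}=1$ for all $i\in[n]$, with all other entries zero; thus every paper $i$ has exactly the two eligible co-authors $\{0,i\}$. Set $p_0 = 0$ and $p_i = 1$ for $i\in[n]$ (any values with $p_0 < p_i$ for all $i$ would serve equally well).

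First I would verify feasibility: the assignment $x_{i,i}=1$ for all $i$ and $x_{i,j}=0$ otherwise satisfies $\sum_{j} a_{i,j}x_{i,j}=1$ for every paper $i$, nominates author $i$ exactly once and author $0$ zero times, and hence satisfies the per-author load constraint $\sum_{i} a_{i,j}x_{i,j}\le b=1$ for every $j\in[m]$. So Definition~\ref{def:hard_problem} admits a feasible solution on this instance.

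Next I would analyze the two algorithms. For the greedy algorithm of Proposition~\ref{pro:orig_prob_greedy}, each paper $i$ picks the eligible co-author of minimum irresponsibility probability; since $p_0 = 0 < 1 = p_i$, this choice is unambiguously author $0$ for every paper. The resulting assignment therefore has $\sum_{i=1}^n a_{i,0}x_{i,0} = n \ge 2 > 1 = b$, violating the nomination limit, so the greedy output is infeasible. For the simple random algorithm (Algorithm~\ref{alg:simple_random}), each paper $i$ nominates author $0$ independently with probability $1/2$, so the number of papers nominating author $0$ is distributed as $\mathrm{Binomial}(n,1/2)$; the probability that this count is at most $b=1$ equals $(n+1)/2^n$, which is strictly less than $1$ for every $n\ge 2$. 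Hence the random algorithm returns an infeasible assignment with probability $1-(n+1)/2^n>0$ (a quantity tending to $1$ as $n\to\infty$), i.e.\ it fails to guarantee feasibility. The same argument works verbatim for an arbitrary fixed limit $b$ by taking $n \ge b+1$ and adjusting the threshold in the binomial tail.

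I expect the only delicate point to be aligning the construction with the precise specifications of Algorithm~\ref{alg:simple_random} and Algorithm~\ref{alg:simple_greedy}: one must ensure that greedy's choice is forced rather than resolved by an arbitrary tie-break (handled here by the strict inequality $p_0 < p_i$), and that ``fails to return a feasible solution'' for the randomized procedure is correctly read as ``is infeasible with positive probability'' rather than ``is always infeasible,'' since on any instance admitting a feasible solution a single uniform draw has a positive chance of landing on one. No heavy machinery is needed; the entire content lies in choosing an incidence structure in which both myopic rules over-concentrate nominations on the single zero-risk author, whereas the optimum spreads them out.
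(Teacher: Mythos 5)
There is a genuine gap: your failure analysis targets the wrong pair of algorithms. The proposition is about Algorithm~\ref{alg:simple_random} and Algorithm~\ref{alg:simple_greedy}, which are \emph{limit-aware} sequential heuristics: for each paper they first form the set $S_{<b}$ of co-authors whose current nomination count is still strictly below the cap $b$, choose only from that set when it is nonempty, and only declare failure (the error flag) when $S_{<b}=\emptyset$. Your argument instead analyzes the limit-ignoring greedy of Proposition~\ref{pro:orig_prob_greedy} (always take the smallest $p_j$) and a uniform random draw over \emph{all} co-authors. On your star instance the actual algorithms succeed: each paper $i$ has a private co-author appearing in no other paper, so $S_{<b}$ always contains that private author and is never empty; the capacity-aware greedy nominates the shared author $0$ only once and is thereafter forced onto the private authors, and the capacity-aware random algorithm likewise excludes author $0$ from $S_{<b}$ as soon as it has one nomination, so it returns a feasible assignment with probability $1$. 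Hence neither the deterministic overload of author $0$ nor your $\mathrm{Binomial}(n,1/2)$ overload can occur, and the construction does not establish the statement (and, with $p_0<p_i$ strict, it even removes the tie that the paper exploits to make greedy err).

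What the proposition needs is an instance on which an early myopic or random choice \emph{exhausts the only eligible author of a later paper}, so that the limit-aware heuristics get stuck. The paper's proof uses $n=m=2$, $b=1$, $a=\begin{bmatrix}1 & 1\\ 1 & 0\end{bmatrix}$, with $p_1=p_2$: paper $2$ has author $1$ as its sole author, so if paper $1$ nominates author $1$ --- which the random algorithm does with probability $1/2$, and the greedy does with positive probability because the tie $p_1=p_2$ puts author $1$ in $S_{\min}$ --- then $S_{<b}=\emptyset$ at paper $2$ and both algorithms set the error flag, while the assignment sending paper $1$ to author $2$ and paper $2$ to author $1$ is feasible. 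Your feasibility check for the star instance and your reading of ``fails'' as ``fails with positive probability'' are both fine and consistent with the paper; to repair the proof you must (i) argue about the algorithms' actual capacity-restricted choice rule, and (ii) use an instance in which some later paper's author set is contained in (and can be saturated by) the choices available to earlier papers, rather than giving every paper a private fallback author.
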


The hard-limit formulation is desirable because it balances risk minimization against catastrophic worst-case outcomes. In Section~\ref{sec:hard_prob_solve}, we present theoretical results and algorithms for solving this hard-limit problem efficiently.

However, it still suffers from a structural limitation: for certain inputs, no feasible solution exists. We illustrate this with the following statement:

\begin{fact}[Existence of infeasible instances]
There exist choices of $n, m, a, b, p$ in Definition~\ref{def:hard_problem} such that no feasible solution exists.
\end{fact}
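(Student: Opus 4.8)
The plan is to exhibit an explicit small instance and argue by a counting (pigeonhole) argument that the equality constraints force an assignment which necessarily violates the per-author capacity constraints. Concretely, I would take $m=1$ author and $n=2$ papers, set the authorship matrix to $a=\boldsymbol{1}_2$ (so both papers list the single author), pick any $p_1\in[0,1]$, and set the nomination limit $b=1$. For any feasible $x$, the constraint $\sum_{j=1}^m a_{i,j}x_{i,j}=1$ for each $i\in[2]$ reduces to $x_{i,1}=1$, since author $1$ is the only candidate author of either paper (the authorship matrix $a$ forbids nominating anyone with $a_{i,j}=0$). Hence the only candidate solution is $x_{1,1}=x_{2,1}=1$. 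But then $\sum_{i=1}^n a_{i,1}x_{i,1}=2>1=b$, so this candidate violates the hard nomination limit, and therefore no feasible $x$ exists.

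I would then note that the same argument scales to arbitrarily many authors and papers and to every value of $b$: taking $m$ arbitrary, $n=mb+1$, and $a=\boldsymbol{1}_{n\times m}$ (every paper lists every author), each paper must nominate exactly one of the $m$ authors, so by pigeonhole some author is nominated by at least $\lceil n/m\rceil=b+1>b$ papers, again violating the limit. This shows the phenomenon is not an artifact of a degenerate tiny example.

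A cleaner and more illuminating route, if desired, is to invoke the deficiency version of Hall's theorem on the bipartite paper–author incidence structure induced by $a$: a feasible assignment (a $b$-matching saturating every paper on the author side) exists if and only if $|N_a(S)|\cdot b\ge |S|$ for every set $S$ of papers, where $N_a(S)=\{\, j : a_{i,j}=1 \text{ for some } i\in S \,\}$. Any instance violating this inequality for some $S$ is infeasible, and the construction above is precisely the extremal case $|N_a(S)|=m$, $|S|=mb+1$, making the failure transparent.

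I expect no genuine obstacle here; the only point requiring (minimal) care is verifying that the equality constraint, filtered through the authorship matrix, truly pins down the assignment in the explicit instance — i.e., that one cannot circumvent the capacity bound by nominating an author who is not on the paper, which is exactly what $a_{i,j}=0$ rules out. Everything else is a routine check, so the statement should follow immediately from the explicit witness.
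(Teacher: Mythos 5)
Your proposal is correct and follows essentially the same route as the paper, which also uses a single author ($m=1$) with more single-authored papers than the limit $b$ so that the per-paper equality constraints force every paper onto that author and the capacity bound $\sum_i a_{i,1}x_{i,1} \le b$ is necessarily violated. Your pigeonhole generalization and the Hall-type remark are fine but not needed beyond the explicit witness.
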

\begin{proof}
    Consider any $n \geq 2$ with $b < n$, and let $m=1$ with $a = \boldsymbol{1}_n$.  
Constraint~1 requires $x = \boldsymbol{1}_n$, i.e., every paper nominates the only author.  
But then 
$
\sum_{i=1}^n a_{i,1} x_{i,1} = n > b
$, 
violating Constraint~2.  
Thus, no feasible solution exists.
\end{proof}

This fact can be illustrated by the following concrete example:
\begin{example}
Let $n=5$, $m=1$, $a = [1,1,1,1,1]^\top$, and $b=2$. This corresponds to an author with $5$ single-authored papers but a nomination limit of $2$.  
If the author is nominated in only $2$ papers, Constraint~1 fails.  
If the author is nominated in all $5$ papers, Constraint~2 fails.  
Hence, the instance is infeasible.
\end{example}

To avoid such impractical cases, we next introduce the soft author nomination limit problem. This relaxation enforces nomination limits in expectation while always guaranteeing the existence of a feasible solution.

\subsection{Desk-Rejection Risk Minimization with Soft Author Nomination Limits} \label{sec:soft_prob}

The hard-limit formulation in Section~\ref{sec:hard_prob} prevents over-reliance on a single ``reliable'' author but may result in cases that have no feasible solutions, as shown earlier.  
To overcome this issue, we relax the hard constraint into a penalty term, allowing every instance to admit a solution while still discouraging excessive nominations of the same author.  
This leads to the soft author nomination limit problem.  

\begin{definition}[Desk-rejection risk minimization with soft author nomination limits, integer program]\label{def:soft_prob}
    Let $n$ denote the number of papers and $m$ the number of authors.  
Let $a \in \{0,1\}^{n\times m}$ be the authorship matrix, where $a_{i,j}=1$ if paper $i$ includes author $j$, and $a_{i,j}=0$ otherwise.  
Each author $j \in [m]$ is associated with an irresponsibility probability $p_j \in [0,1]$.  
Let $b \in \mathbb{N}_+$ be the nomination limit, and $\lambda > 0$ be a regularization parameter controlling the penalty weight.  

The objective is to minimize the expected number of desk-rejected papers by solving:
\begin{align*}
    \min_{x \in \{0,1\}^{n \times m}} ~ & \sum_{i=1}^n \sum_{j=1}^m x_{i,j} p_j + \lambda \sum_{j=1}^m \max\{0, \sum_{i=1}^n a_{i,j}x_{i,j} - b\}  \\
    \mathrm{~s.t.~} & \sum_{j=1}^m a_{i,j} x_{i,j} = 1, \forall i \in [n] \\
\end{align*}
\end{definition}
In the definition above, the additional term 
$
    \lambda \sum_{j=1}^m \max\{0, \sum_{i=1}^n a_{i,j}x_{i,j} - b\}
$
acts as an $\ell_1$ penalty on each author $j \in [m]$, proportional to the number of nominations exceeding the limit $b$.  
Intuitively, when $\lambda$ is large, this enforces behavior similar to the hard-limit formulation in Definition~\ref{def:hard_problem}, while still guaranteeing feasibility because violations incur a penalty rather than being strictly forbidden.  

\begin{remark}[Degeneration under large nomination limits]
When the nomination limit $b$ is sufficiently large (e.g., $b \geq n$), 
Definition~\ref{def:soft_prob} reduces to the original formulation in 
Definition~\ref{def:orig_problem}. 
\end{remark}

In Section~\ref{sec:soft_prob_solve}, we present theoretical results and algorithms for solving this soft-limit problem efficiently.

\section{Main Results}\label{sec:proposed}

In Section~\ref{sec:hard_prob_solve}, we present our theoretical results on solving the hard author nomination limit problem. In Section~\ref{sec:soft_prob_solve}, we extend our theoretical results to the soft nomination limits.

\subsection{Solving Hard Author Nomination Limits}\label{sec:hard_prob_solve}

{\bf Linear Program Relaxation.}  
To solve the hard author nomination problem in Definition~\ref{def:hard_problem}, a natural first step is to relax the integer program into a linear program:

\begin{definition}[Desk-rejection risk minimization with hard author nomination limits, relaxed linear program]\label{def:hard_author_limit_lp}
Let $n$ denote the number of papers and $m$ the number of authors.  
Let $a \in \{0,1\}^{n\times m}$ be the authorship matrix, where $a_{i,j}=1$ if paper $i$ includes author $j$, and $a_{i,j}=0$ otherwise.  
Each author $j \in [m]$ is associated with an irresponsibility probability $p_j \in [0,1]$.  
Let $b \in \mathbb{N}_+$ denote the maximum number of papers for which an author can be nominated as reviewer.  

The objective is to minimize the expected number of desk-rejected papers by solving:
    \begin{align*}
    \min_{x \in [0,1]^{n\times m}} ~ & \sum_{i=1}^n \sum_{j=1}^m x_{i,j} p_j \\
    \mathrm{~s.t.~} & \sum_{j=1}^m a_{i,j} x_{i,j} = 1, \forall i \in [n] \\
    & \sum_{i=1}^n a_{i,j}x_{i,j}\leq b, \forall j \in [m] 
    \end{align*}
\end{definition}

Although linear programs can be solved efficiently, the relaxation may yield fractional solutions that are not valid assignments in the original integer problem. In Proposition~\ref{prop:hard_prob_lp_frac} we showed that such fractional optima indeed exist. This motivates us to search for an exact combinatorial algorithm with guaranteed integrality.

\begin{proposition}[Existence of fractional optima for the relaxed hard nomination problem, informal version of Proposition~\ref{prop:hard_prob_lp_frac_formal}]\label{prop:hard_prob_lp_frac}
There exists an instance of Definition~\ref{def:hard_author_limit_lp} whose optimal solution $x$ is fractional (i.e., there exist indices $i \in [n]$ and $j \in [m]$ such that $x_{i,j} \notin \{0,1\}$).
\end{proposition}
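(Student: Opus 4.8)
The plan is to exhibit an explicit small instance of Definition~\ref{def:hard_author_limit_lp} and verify by hand that its LP admits a fractional optimal solution. It is worth noting up front that, after discarding the variables $x_{i,j}$ with $a_{i,j}=0$ (which may be fixed to $0$ at no cost and with no effect on feasibility), the constraint matrix of the LP is the incidence matrix of a bipartite (paper, author) structure: one equality row per paper and one inequality row per author, with all nonzero coefficients equal to $1$. Such matrices are totally unimodular, so the LP always has an \emph{integral} optimal vertex; the content of the proposition is therefore \emph{not} that the LP value beats the IP value, but that the optimal face can also contain fractional points, so a black-box LP solver may return a non-integral assignment. This is exactly the gap that motivates the combinatorial (min-cost-flow) approach in Section~\ref{sec:hard_prob_solve}.

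For the construction I would take $n=m=2$, the all-ones authorship matrix ($a_{i,j}=1$ for all $i,j$), probabilities $0 \le p_1 < p_2 \le 1$, and hard limit $b=1$. Intuitively, both papers would prefer to nominate the cheaper author $1$, but $b=1$ forbids author $1$ from being nominated by both papers, so every feasible solution must spread its nominations across the two authors. I would then propose the candidate point $x^\star_{i,j} = 1/2$ for all $i,j\in\{1,2\}$, and also record the integral optimum $\hat x$ with $\hat x_{1,1}=\hat x_{2,2}=1$ and the remaining entries $0$.

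The verification has two parts. (i) Feasibility: each paper row of $x^\star$ sums to $1/2+1/2=1$, and each author column sums to $1/2+1/2=1\le b$, so $x^\star$ is feasible, and it is fractional since $x^\star_{1,1}=1/2\notin\{0,1\}$. (ii) Optimality: for any feasible $x$, let $s := x_{1,1}+x_{2,1}$; the two equality constraints give $x_{1,2}+x_{2,2} = 2-s$, so the objective equals $p_1 s + p_2(2-s) = 2p_2 - (p_2-p_1)s$, which is decreasing in $s$ because $p_2 > p_1$. The author-$1$ limit forces $s \le b = 1$, hence the objective is at least $2p_2 - (p_2-p_1) = p_1+p_2$; both $x^\star$ and $\hat x$ have $s=1$ and attain this value, so both are optimal. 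This shows the optimal face contains both an integral and a fractional point, which proves the claim; the formal statement Proposition~\ref{prop:hard_prob_lp_frac_formal} just records this instance and computation.

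The step needing the most care is making the instance \emph{non-degenerate}: one must ensure that the per-author column constraint is actually binding, i.e.\ that $b < n$ and $p_1 < p_2$ together prevent the example from collapsing to Definition~\ref{def:orig_problem} (in which the purely greedy integral assignment "both papers nominate author $1$" would be optimal and the fractional phenomenon of interest would not appear). With $b=1<n=2$ and $p_1<p_2$ this is immediate, and everything else is the elementary arithmetic above; I do not expect any deeper obstacle.
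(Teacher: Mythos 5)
Your proof is correct and takes essentially the same approach as the paper: both exhibit the $2\times 2$ all-ones authorship instance with $b=1$ and directly verify that a fractional point (all entries equal to $1/2$, or more generally the symmetric family) lies on the optimal face alongside an integral optimum. The only cosmetic difference is that the paper sets $p_1=p_2=1/6$ so the objective is constant on the feasible set, whereas you take $p_1<p_2$ and add a one-line argument via the binding column constraint; your total-unimodularity aside is correct and consistent with the paper's later min-cost-flow integrality result.
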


{\bf Minimum-Cost Flow Equivalence.} 
The hard author nomination limit problem can be reformulated as a minimum-cost circulation problem, and hence as a special case of the minimum-cost flow problem. The intuition is straightforward:
\begin{itemize}
    \item Each paper $i \in [n]$ corresponds to a demand of exactly one unit of flow (since it must nominate exactly one reviewer).  
    \item Each author $j \in [m]$ corresponds to a capacity-limited supply node, where the outgoing flow cannot exceed the nomination limit $b$. 
    \item Edges between authors and papers encode the author-paper matrix (i.e., $a_{i,j}=1$), with cost equal to the irresponsibility probability $p_j$.  
    \item By routing $n$ units of flow from a source vertex through authors into papers and then to a sink vertex, we enforce both feasibility and the per-paper nomination requirement.
\end{itemize}

This construction transforms Definition~\ref{def:hard_problem} into a network flow instance with integral capacities and costs. A key property of minimum-cost flow is that whenever capacities and demands are integral, the problem always admits an optimal integral solution. Thus, unlike the LP relaxation, the flow-based formulation guarantees feasibility and integrality without the need for rounding.

Therefore, we obtain the following result:

\begin{theorem}[Equivalence to minimum-cost flow, informal version of Theorem~\ref{thm:hard_equivalence_mcf_formal}]\label{thm:hard_equivalence_mcf}
The hard author nomination problem in Definition~\ref{def:hard_problem} is equivalent to a minimum-cost flow problem, and therefore always admits an optimal integral solution whenever a feasible assignment exists.
\end{theorem}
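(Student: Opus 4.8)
The plan is to prove Theorem~\ref{thm:hard_equivalence_mcf} by an explicit reduction in both directions between feasible integral assignments of Definition~\ref{def:hard_problem} and integral feasible flows on a suitably constructed network, showing that objective values match under the correspondence, and then invoking the classical integrality theorem for minimum-cost flow. First I would construct the network $G = (V, E)$: introduce a source $s$, a sink $t$, one vertex $u_j$ for each author $j \in [m]$, and one vertex $v_i$ for each paper $i \in [n]$. Add an edge $s \to u_j$ with capacity $b$ and cost $p_j$ for every author $j$; add an edge $u_j \to v_i$ with capacity $1$ and cost $0$ whenever $a_{i,j} = 1$; and add an edge $v_i \to t$ with capacity $1$ (in fact demand exactly $1$) and cost $0$ for every paper $i$. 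We then require a feasible flow that sends exactly one unit into each $v_i$, equivalently a flow of total value $n$ from $s$ to $t$ (one can encode the equality constraint either by setting both lower and upper capacity of $v_i \to t$ to $1$, or by adding a return arc $t \to s$ of capacity exactly $n$ to make it a circulation).

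The key steps are then as follows. (i) \textbf{Correspondence of solutions.} Given an integral feasible $x$ for Definition~\ref{def:hard_problem}, define $f(u_j \to v_i) = x_{i,j}$ for each edge with $a_{i,j}=1$, $f(v_i \to t) = \sum_j a_{i,j} x_{i,j} = 1$ by Constraint~1, and $f(s \to u_j) = \sum_i a_{i,j} x_{i,j} \le b$ by Constraint~2. Check flow conservation at each $u_j$ and each $v_i$: conservation at $v_i$ holds because the unique inflow total equals $1$ which matches the outflow to $t$; conservation at $u_j$ holds by the defining equation for $f(s \to u_j)$. Capacities are respected exactly because $x$ is $\{0,1\}$-valued and the limit constraint holds. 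The cost of $f$ is $\sum_j p_j f(s \to u_j) = \sum_j p_j \sum_i a_{i,j} x_{i,j} = \sum_i \sum_j x_{i,j} p_j$, using $a_{i,j} x_{i,j} = x_{i,j}$ whenever $x_{i,j}=1$ (a point worth stating carefully: the objective in Definition~\ref{def:hard_problem} is written as $\sum_{i,j} x_{i,j} p_j$, and we may assume w.l.o.g.\ that $x_{i,j} = 0$ whenever $a_{i,j}=0$, since such an $x_{i,j}$ never helps satisfy Constraint~1 and only increases the objective). (ii) \textbf{Reverse correspondence.} Conversely, given any integral feasible flow $f$ of value $n$, set $x_{i,j} := f(u_j \to v_i)$ for $a_{i,j}=1$ and $x_{i,j} := 0$ otherwise; the capacity-$1$ bounds force $x \in \{0,1\}^{n \times m}$, conservation at $v_i$ together with the $v_i \to t$ capacity forces Constraint~1, and the $s \to u_j$ capacity forces Constraint~2; the cost identity runs in reverse. (iii) \textbf{Integrality and optimality.} Since all capacities (all equal to $1$, $b$, or $n$) and the demand $n$ are integers, the minimum-cost flow polytope has integral vertices, so a min-cost flow of value $n$ can be taken integral; combined with the cost-preserving bijection, an optimal integral $x$ exists whenever any feasible $x$ exists, and its objective equals the min-cost flow value.

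The main obstacle I anticipate is bookkeeping around two subtleties rather than anything conceptually deep: first, faithfully encoding the \emph{equality} constraint $\sum_j a_{i,j} x_{i,j} = 1$ (as opposed to $\le 1$) in the flow network — this is handled by fixing the $v_i \to t$ arc to carry exactly one unit, or equivalently by demanding total flow value exactly $n$ while every $v_i \to t$ has capacity $1$, so the only way to route $n$ units is to saturate all of them; second, reconciling the objective $\sum_{i,j} x_{i,j} p_j$ with the flow cost, which requires the (harmless) normalization $x_{i,j}=0$ when $a_{i,j}=0$ so that $\sum_{i,j} x_{i,j} p_j = \sum_{i,j} a_{i,j} x_{i,j} p_j$, matching the per-author flow cost $\sum_j p_j f(s \to u_j)$. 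I would also remark that feasibility of Definition~\ref{def:hard_problem} is equivalent to the existence of a flow of value $n$ (a standard bipartite $b$-matching / Hall-type condition), which is exactly the hypothesis ``whenever a feasible assignment exists'' in the theorem statement. Once these points are pinned down, the equivalence and the integral-optimality conclusion follow immediately from the classical min-cost flow integrality theorem.
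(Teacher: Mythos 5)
Your proposal is correct and follows essentially the same construction as the paper: a network with source, author, paper, and sink vertices, capacity $b$ on source--author arcs, unit-capacity author--paper arcs wherever $a_{i,j}=1$, a forced single unit out of each paper vertex (the paper encodes this via lower bounds on the paper--sink arcs plus a return arc, phrasing it as a min-cost circulation and citing the circulation/flow equivalence), and then the integrality theorem for integral capacities. The only cosmetic difference is that you place the cost $p_j$ on the source--author arc instead of on the author--paper arcs, which is equivalent by flow conservation at the author node; your explicit two-way solution correspondence and your handling of entries with $a_{i,j}=0$ are if anything more careful than the paper's own argument.
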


The minimum-cost flow problem has been studied extensively, with polynomial-time algorithms~\cite{ek72,gt90,ds08} available since 1972. Our result in Theorem~\ref{thm:hard_equivalence_mcf} enables the hard-limit problem to be solved not only by these classical algorithms, many of which already have mature off-the-shelf implementations, but also by the most recent state-of-the-art methods. 

\begin{remark}[Efficiency and practical implications]
Recent state-of-the-art results achieve $\widetilde{O}(M + N^{1.5})$\footnote{The $\wt{O}$-notation omits the $N^{o(1)}$ and $M^{o(1)}$ factors.} time for minimum-cost flow problems with $N$ vertices and $M$ edges~\cite{bll+21}. For large AI conferences such as ICLR, where both $N$ and $M$ scale on the order of $10^4$~\cite{lsz25}, the hard nomination limit problem in Definition~\ref{def:hard_problem} can be solved efficiently in practice while remaining exact and optimal integer solutions. 
\end{remark}

\subsection{Solving Soft Author Nomination Limits} \label{sec:soft_prob_solve}

{\bf Relaxing the Integer Program.} 
Hard nomination limits can make some instances infeasible. A natural remedy is to
``soften'' the limit by penalizing nomination overloads rather than totally forbidding them in the constraints.
This preserves feasibility while discouraging solutions that concentrate nominations on
a few authors. To solve the soft author nomination limit problem in Definition~\ref{def:soft_prob}, we first relax the original integer program into a continuous form.

\begin{definition}[Desk-rejection risk minimization with soft author nomination limits, relaxed problem]\label{def:soft_prob_convex}
    Let $n$ denote the number of papers and $m$ the number of authors.  
Let $a \in \{0,1\}^{n\times m}$ be the authorship matrix, where $a_{i,j}=1$ if paper $i$ includes author $j$, and $a_{i,j}=0$ otherwise.  
Each author $j \in [m]$ is associated with an irresponsibility probability $p_j \in [0,1]$.  
Let $b \in \mathbb{N}_+$ be the nomination limit, and $\lambda > 0$ be a regularization parameter controlling the penalty weight.  

The objective is to minimize the expected number of desk-rejected papers by solving:
\begin{align*}
    \min_{x \in [0,1]^{n\times m}} ~ & \sum_{i=1}^n \sum_{j=1}^m x_{i,j} p_j + \lambda \sum_{j=1}^m \max\{0, \sum_{i=1}^n a_{i,j}x_{i,j} - b\}  \\
    \mathrm{~s.t.~} & \sum_{j=1}^m a_{i,j} x_{i,j} = 1, \forall i \in [n] 
\end{align*}
\end{definition}

\begin{fact}[Convexity of Definition~\ref{def:soft_prob_convex}]
    Problem~\ref{def:soft_prob_convex} is a convex optimization problem.
\end{fact}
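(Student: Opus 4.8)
The plan is to verify the two defining ingredients of a convex optimization problem separately: that the feasible set is convex, and that the objective is a convex function on the ambient space. For the feasible region, I would observe that it is the intersection of the box $[0,1]^{n\times m}$ — a product of closed intervals, hence convex — with the $n$ affine equality constraints $\sum_{j=1}^m a_{i,j}x_{i,j}=1$, each of which defines a hyperplane in $\R^{n\times m}$ and is therefore convex. Since an arbitrary intersection of convex sets is convex, the feasible region is convex.

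Next I would show the objective is convex in $x$. Write it as $f(x)=f_1(x)+\lambda f_2(x)$ with $f_1(x)=\sum_{i=1}^n\sum_{j=1}^m p_j x_{i,j}$ and $f_2(x)=\sum_{j=1}^m \max\{0,\ \sum_{i=1}^n a_{i,j}x_{i,j}-b\}$. The term $f_1$ is a linear functional of $x$ (the $p_j$ are fixed constants), hence convex. For $f_2$, fix an author $j$ and note that $x\mapsto \sum_{i=1}^n a_{i,j}x_{i,j}-b$ is an affine map $\R^{n\times m}\to\R$; the scalar map $t\mapsto\max\{0,t\}$ is convex, being the pointwise maximum of the two affine functions $t\mapsto 0$ and $t\mapsto t$; and the composition of a convex function with an affine map is convex. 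Summing these $m$ convex functions preserves convexity, and scaling by $\lambda>0$ preserves it as well, so $f_2$ is convex. Hence $f=f_1+\lambda f_2$ is a sum of convex functions and is convex.

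Combining the two parts, Definition~\ref{def:soft_prob_convex} minimizes a convex function over a convex feasible set, which is by definition a convex optimization problem. There is no genuine obstacle here; the only step warranting a moment of care is the convexity of the nonsmooth penalty term $\max\{0,\cdot\}$ together with the fact that precomposition with an affine map preserves convexity. Everything else follows immediately from the standard calculus of convex functions (nonnegative combinations and intersections), so the write-up will be short.
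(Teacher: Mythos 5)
Your proposal is correct and follows essentially the same route as the paper: the objective is the sum of a linear term and convex functions (the hinge penalty $\max\{0,\cdot\}$ composed with affine maps), and the feasible set is the intersection of the box constraints with affine equalities, hence convex. Your write-up simply spells out these standard convexity-calculus steps in more detail than the paper does.
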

\begin{proof}
   The objective is a sum of a linear term and a sum of convex functions composed with affine maps. The constraints are linear equalities together with box constraints $0 \le x_{i,j} \le 1, \forall i \in [n], j \in [m]$, so the feasible set is convex. Therefore, the problem is convex, which finishes the proof.
\end{proof}

Although the relaxed soft author nomination limit problem is convex, it is still algorithmically non-straightforward. The second term introduces non-smoothness in the objective, and unlike standard smooth–non-smooth decompositions, it is not separable from the linear part. Combined with the equality constraints, this coupling makes the problem harder to solve directly. 

{\bf Linear Program Re-formulation.} 
To address these challenges, we propose a linear programming reformulation inspired by the epigraph trick~\cite{bv04}, which removes non-smoothness through auxiliary variables and facilitates the use of efficient LP solvers with robust theoretical guarantees and practical implementations.

\begin{definition}[Desk-rejection risk minimization with soft author nomination limits, linear program]\label{def:soft_prob_lp}
    Let $n$ denote the number of papers and $m$ the number of authors.  
Let $a \in \{0,1\}^{n\times m}$ be the authorship matrix, where $a_{i,j}=1$ if paper $i$ includes author $j$, and $a_{i,j}=0$ otherwise.  
Each author $j \in [m]$ is associated with an irresponsibility probability $p_j \in [0,1]$.  
Let $b \in \mathbb{N}_+$ be the nomination limit, and $\lambda > 0$ be a regularization parameter controlling the penalty weight.  

 Let $y \in \R_+^m$ denote the penalty for nominating an author in too many papers. The objective is to minimize the expected number of desk-rejected papers by solving:
\begin{align*}
    \min_{x \in [0,1]^{n\times m}, y \in \R_+^m} ~ & \sum_{i=1}^n \sum_{j=1}^m x_{i,j} p_j + \lambda \sum_{j=1}^m y_j  \\
    \mathrm{~s.t.~} & \sum_{j=1}^m a_{i,j} x_{i,j} = 1, \forall i \in [n] \\
    & y_j \geq \sum_{i=1}^n a_{i,j}x_{i,j} - b, j \in [m]\\
\end{align*}
\end{definition}

\begin{theorem}[Equivalence of the convex and linear program formulations for the soft nomination limit problem, informal version of Theorem~\ref{thm:convex_lp_equi_formal}]\label{thm:convex_lp_equi}
The relaxed soft author nomination limit problem in Definition~\ref{def:soft_prob_convex} and its linear program reformulation in Definition~\ref{def:soft_prob_lp} are equivalent 
with respect to the assignment variable $x$. 
\end{theorem}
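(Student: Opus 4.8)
The plan is to prove the two problems are equivalent in the variable $x$ by showing that for any fixed feasible $x$, the optimal choice of the auxiliary vector $y$ in Definition~\ref{def:soft_prob_lp} reproduces exactly the penalty term of Definition~\ref{def:soft_prob_convex}. Concretely, I would argue both directions of an inequality between the optimal values, parametrized by $x$. First I would observe that the two problems share the same feasible region for $x$: the equality constraints $\sum_{j=1}^m a_{i,j} x_{i,j} = 1$ and the box constraints $x \in [0,1]^{n \times m}$ are identical, and in the LP formulation $y$ ranges over $\R_+^m$ with the only coupling constraint being $y_j \ge \sum_{i=1}^n a_{i,j} x_{i,j} - b$ for each $j \in [m]$.

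The key step is the pointwise (in $j$) minimization of $y_j$. Fix a feasible $x$ and fix $j \in [m]$; write $s_j := \sum_{i=1}^n a_{i,j} x_{i,j}$. Since the objective contribution of $y_j$ is $\lambda y_j$ with $\lambda > 0$, and the constraints on $y_j$ are exactly $y_j \ge 0$ and $y_j \ge s_j - b$, the minimizing value is $y_j^\star = \max\{0,\, s_j - b\}$. Summing over $j$ and adding the shared linear term $\sum_{i,j} x_{i,j} p_j$ shows that minimizing the LP objective over $y$ for fixed $x$ yields exactly the objective of Definition~\ref{def:soft_prob_convex} evaluated at $x$. Taking the minimum over the common feasible set of $x$ then gives equality of optimal values; moreover, any optimal $x^\star$ for Definition~\ref{def:soft_prob_convex} is extended to an optimal $(x^\star, y^\star)$ for the LP by setting $y_j^\star = \max\{0, \sum_i a_{i,j} x^\star_{i,j} - b\}$, and conversely, if $(x^\star, y^\star)$ is optimal for the LP then $x^\star$ is optimal for the convex problem (since otherwise a better $x$ would, together with its induced $y$, beat the LP optimum).

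There is no serious obstacle here; the argument is the standard epigraph reformulation of a piecewise-linear convex term. The one point to state carefully is that the inner minimization over $y$ decouples across coordinates $j$ — this holds because neither the objective term $\lambda \sum_j y_j$ nor the constraint set links different $y_j$'s to each other, only each $y_j$ to $x$. A second minor point worth making explicit is that in the LP we do not need an upper bound on $y_j$: feasibility is automatic since $y_j = \max\{0, s_j - b\}$ is finite, and optimality pushes $y_j$ down to this value, so leaving $y \in \R_+^m$ unbounded above is harmless. I would also remark that the equivalence is in $x$ only — the LP has extra coordinates $y$ — and that it is exact (not up to rounding), with the rounding of fractional $x$ to an integral assignment handled separately.
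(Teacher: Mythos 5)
Your proposal is correct and follows essentially the same route as the paper's own proof: both fix $x$, observe that with $\lambda > 0$ the optimal auxiliary variable must satisfy $y_j = \max\{0, \sum_{i=1}^n a_{i,j}x_{i,j} - b\}$ for each $j \in [m]$ (the paper does this via a two-case analysis, you via pointwise partial minimization), and then substitute back to recover the objective of Definition~\ref{def:soft_prob_convex}. Your write-up is, if anything, slightly more explicit about the two-directional transfer of optimality between $x^\star$ and $(x^\star, y^\star)$, but the underlying epigraph argument is the same.
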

This LP reformulation enables us to solve the soft author nomination limit problem in Definition~\ref{def:soft_prob} using modern linear programming solvers, which provide both strong optimality guarantees and high practical efficiency. To assess the computational efficiency of such solvers, we first discuss sparsification and then present complexity considerations.

\begin{remark}[Sparsification of decision variables]
In practice, the optimization variables $x_{i,j}$ only need to be maintained for entries with $a_{i,j}=1$.  
For all pairs $(i,j) \in [n] \times [m]$ with $a_{i,j}=0$, we can fix $x_{i,j}=0$ without affecting feasibility or optimality.  
Thus, the total number of decision variables reduces from $O(nm)$ to $O(\nnz(a))$.
\end{remark}

\begin{remark}[Complexity of solving the LP reformulation]
The time complexity of solving the LP in Definition~\ref{def:soft_prob_lp} aligns with the performance of modern linear programming methods. For example, using the stochastic central path method~\cite{cls21,jswz21,dly21}, the problem can be solved in
$
    \wt{O}(K^{2.37}\log(K/\delta))
$
time\footnote{The $\wt{O}$-notation omits the $K^{o(1)}$ factors.}, where $K$ is the number of decision variables and $\delta$ is the relative accuracy parameter of a $(1+\delta)$-approximation guarantee. 
\end{remark}

\begin{remark}[Practical implications]
Empirical statistics from recent AI conferences (e.g., ICLR) show that $\nnz(a)$ typically scales to the order of $10^4$~\cite{lsz25}.  
At this scale, the LP reformulation can be solved efficiently within available computational resources.  
This ensures that our formulation is not only theoretically sound but also practical for real-world conference applications.
\end{remark}

{\bf Rounding.} Solving the LP relaxation (Definition~\ref{def:soft_prob_lp}) of the soft author nomination problem (Definition~\ref{def:soft_prob}) provides a (possibly) fractional solution $x \in [0,1]^{n\times m}$. While this fractional solution is optimal in the relaxed space, it is not directly implementable in practice because reviewer nominations must be integral: each paper must nominate exactly one author, not a weighted combination of several authors. Thus, rounding is required to transform the LP’s fractional solution into a valid integer solution that respects the feasibility constraints of the original problem. 

The rounding algorithm in Algorithm~\ref{alg:rounding_soft} achieves this by selecting, for each paper, the author with the largest fractional assignment and setting the corresponding decision variable to 1, ensuring that every paper nominates exactly one reviewer.

\begin{algorithm}
\caption{Rounding Algorithm for Desk-Rejection Risk Minimization with Soft Author Nomination Limit (Definition~\ref{def:soft_prob_lp})}
\label{alg:rounding_soft}
\begin{algorithmic}[1]
\Procedure{RoundingSoft}{$a \in \{0,1\}^{n \times m}, x \in [0,1]^{n\times m},n,m,b \in \mathbb{N}_+, p \in (0,1)^m$}
    \State $\wt{x} \gets \boldsymbol{0}_{n\times m}$
    \For{$i \in [n]$}
        \State $j_\star \gets \arg\max_{j \in [m]} x_{i,j}$ \Comment{pick the largest fractional value in row $i$}
        \State $\wt{x}_{i,j_\star} \gets 1$
    \EndFor
    \State \Return $\wt{x}$
\EndProcedure
\end{algorithmic}
\end{algorithm}

\begin{proposition}[Correctness and efficiency of rounding algorithm]
Algorithm~\ref{alg:rounding_soft} produces a feasible solution $\wt{x} \in \{0,1\}^{n\times m}$ to the integer soft nomination limit problem in Definition~\ref{def:soft_prob} in $O(\nnz(a))$ time. 
\end{proposition}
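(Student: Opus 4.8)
The plan is to verify the two assertions separately: feasibility of $\wt{x}$ for the integer program in Definition~\ref{def:soft_prob}, and the $O(\nnz(a))$ running time. The key structural fact that makes feasibility easy is that Definition~\ref{def:soft_prob} has \emph{no} hard per-author constraints — the only constraints are the per-paper equalities $\sum_{j=1}^m a_{i,j} x_{i,j} = 1$ for all $i\in[n]$ together with integrality $x \in \{0,1\}^{n\times m}$ — since any nomination overload is absorbed into the finite penalty term in the objective rather than being forbidden. Hence every $\wt{x} \in \{0,1\}^{n\times m}$ that satisfies the per-paper equalities is automatically feasible, and we never need to reason about the value of the penalty. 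So the whole task reduces to checking that Algorithm~\ref{alg:rounding_soft} outputs a $0/1$ matrix in which each row sums (against $a_{i,\cdot}$) to exactly $1$.

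First I would establish integrality and the ``exactly one nomination'' property, which is immediate from the construction: the algorithm initializes $\wt{x} = \boldsymbol{0}_{n\times m}$ and, for each row $i \in [n]$, sets exactly one entry $\wt{x}_{i,j_\star}$ to $1$ and leaves the rest at $0$. Thus $\wt{x}_{i,j} \in \{0,1\}$ for all $i,j$ and $\sum_{j=1}^m \wt{x}_{i,j} = 1$ for every paper $i$. The only genuine point to verify is that the selected author $j_\star$ is actually an author of paper $i$, i.e.\ $a_{i,j_\star} = 1$; granting this, $\sum_{j=1}^m a_{i,j} \wt{x}_{i,j} = a_{i,j_\star} = 1$, so every per-paper constraint holds and $\wt{x}$ is feasible. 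To show $a_{i,j_\star}=1$: the fractional input $x$ is feasible for the LP in Definition~\ref{def:soft_prob_lp}, so $\sum_{j=1}^m a_{i,j} x_{i,j} = 1$; since all $x_{i,j}\ge 0$, there is at least one author $j$ of paper $i$ (one with $a_{i,j}=1$) having $x_{i,j} > 0$, and in particular $\max_{j\in[m]} x_{i,j} > 0$. Invoking the sparsification convention (fixing $x_{i,j}=0$ whenever $a_{i,j}=0$, which preserves feasibility and optimality), every coordinate with $a_{i,j}=0$ equals $0$, so the strictly positive row maximum can only be attained at some $j_\star$ with $a_{i,j_\star}=1$. (Equivalently, if one prefers not to rely on sparsification, one takes the $\arg\max$ over the nonempty support $\{j : a_{i,j}=1\}$, and the same conclusion is immediate.) This is the step to handle with care, although it is short.

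Finally, for the running time: initializing $\wt{x}$ as the sparse all-zero matrix costs $O(1)$ (or $O(\nnz(a))$ if it is materialized over the support of $a$). For each $i \in [n]$, computing $j_\star$ requires scanning only the entries $\{j : a_{i,j}=1\}$, of which there are $\nnz(a_{i,\cdot})$, and setting one entry is $O(1)$; summing over all papers gives $\sum_{i=1}^n O(\nnz(a_{i,\cdot})) = O(\nnz(a))$, the claimed bound. The main obstacle here is nothing deep — it is exactly the verification that the row-maximizing coordinate is a true coauthor of the paper; once that is in place, both feasibility and the time bound follow by direct bookkeeping.
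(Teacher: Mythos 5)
Your proposal is correct and follows essentially the same route as the paper's proof: feasibility reduces to checking that each row of $\wt{x}$ selects exactly one true coauthor (which you justify, a bit more carefully than the paper, via the LP row constraint plus the sparsification/support convention guaranteeing the row maximum lies in $\{j : a_{i,j}=1\}$), and the running time follows from a per-row scan of the support summing to $O(\nnz(a))$.
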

\begin{proof}
{\bf Part 1. Correctness.} 
For each paper $i\in[n]$, let $S_i := \{j\in[m]: a_{i,j}=1\}$ denote its author set.  
By the constraints of the linear program, we have $x_{i,j}=0$ for $j\notin S_i$ and $\sum_{j\in S_i} x_{i,j}=1$.  
Algorithm~\ref{alg:rounding_soft} selects $j_\star \in \arg\max_{j\in S_i} x_{i,j}$ and sets $\wt{x}_{i,j_\star}=1$ and $\wt{x}_{i,j}=0$ for $j\neq j_\star$.  
Hence, exactly one author in $S_i$ is chosen and none outside $S_i$, i.e., 
$
\sum_{j=1}^m a_{i,j} \wt{x}_{i,j} = 1,  \forall i\in[n]
$.
This satisfies the per-paper nomination constraint, which is the only constraint in Definition~\ref{def:soft_prob}. Thus, we can conlude that $\wt{x}$ is feasible.

{\bf Part 2. Running Time.} 
For each paper $i$, the algorithm scans $S_i$ once to find $j_\star$, which takes $O(\nnz(a_{i,*}))$ time.  
Summing over all papers gives total time
$
 O(\sum_{i=1}^n \nnz(a_{i,*})) = O(\nnz(a))
$.
Therefore, Algorithm~\ref{alg:rounding_soft} runs in $O(\nnz(a))$ time.
\end{proof}

\section{Conclusion}\label{sec:conclusion}

This paper provides the first systematic study of reviewer nomination policies from the perspective of author welfare. By formulating the desk-rejection risk minimization problem and analyzing three variants, we establish clean theoretical foundations and efficient algorithms for mitigating nomination-related desk rejections. Our results show that while the basic problem admits a simple greedy solution, the hard and soft nomination limit problems connect naturally to well-studied optimization paradigms such as minimum-cost flow and linear programming, allowing us to leverage decades of algorithmic progress.

\ifdefined\isarxiv
\clearpage
\appendix

\begin{center}
    \textbf{\LARGE Appendix }
\end{center}


{\bf Roadmap.} 
In Section~\ref{sec:flow}, we explain some background knowledge on network flow problems.
In Section~\ref{sec:proof_prob}, we show the missing proofs in Section~\ref{sec:prob_formulation}.  
In Section~\ref{sec:proof_main}, we supplement the missing proofs in Section~\ref{sec:proposed}. 
In Section~\ref{sec:soft_baselines}, we present several useful baseline algorithms for the soft nomination limit problems. 

\section{Backgrounds on Network Flow} \label{sec:flow}

We introduce the minimum-cost flow problem in Section~\ref{sec:mcf}, and then introduce the equivalent minimum-cost circulation problem in Section~\ref{sec:mcc}. 

\subsection{Minimum-cost Flow Problem}\label{sec:mcf}

\begin{definition}[Feasible flow, implicit in page 296 of~\cite{amo93}]\label{def:feasible_flow}
    Let $G := (V,E)$ be a directed graph, where $V := [N]$ is the vertex set and 
    $E := \{(i_1,j_1), (i_2,j_2), \ldots, (i_M,j_M)\}$ is the edge set.  
    For each $(i,j) \in E$, we define a capacity $c(i,j) \in \Z_{\ge 0}$, 
    and a cost $w(i,j) \in \R$.  
    Each vertex $v \in V$ is associated with a supply/demand value $b(v) \in \Z$ 
    such that $\sum_{v \in V} b(v) = 0$.

    We say that a function $f: E \to \R_{\ge 0}$ is a feasible flow if $f$ satisfies:
    \begin{itemize}
        \item $0 \le f(i,j) \le c(i,j), \forall (i,j) \in E$,
        \item $\sum_{k:(v,k) \in E} f(v,k) - \sum_{k:(k,v) \in E} f(k,v) = b(v), \forall v \in V$.
    \end{itemize}
\end{definition}

\begin{definition}[Minimum-cost flow problem, implicit in page 296 of~\cite{amo93}]\label{def:mcf}
    Given a feasible flow $f$ as in Definition~\ref{def:feasible_flow}, 
    the minimum-cost flow problem is to find a feasible flow $f$ that minimizes
    \begin{align*}
        \sum_{(i,j) \in E} w(i,j) \cdot f(i,j).
    \end{align*}
\end{definition}

\subsection{Minimum-cost Circulation Problem}\label{sec:mcc}

\begin{definition}[Circulation, implicit in page 194 of~\cite{amo93}]\label{def:circulation}
    Let $G := (V,E)$ be a directed graph, where $V := [N]$ is the vertex set and $E := \{(i_1,j_1), (i_2,j_2), \ldots, (i_M,j_M)\}$ is the edge set.  
    For all $(i,j) \in E$, we define a demand $d(i,j) \in \Z_{\ge0}$ and a capacity $c(i,j) \in \Z_{\ge0}$ such that $0\le d(i,j) \le c(i,j)$. 
    
    We say that a function $f: E \to \R_{\ge 0}$ is a \emph{circulation} if $f$ satisfies:
    \begin{itemize}
        \item $d(i,j) \le f(i,j) \le c(i,j), \forall (i,j) \in E$,
        \item $\sum_{k:(i,k) \in E} f(i,k) = \sum_{k:(k,i) \in E} f(k,i), \forall i \in [N]$.
    \end{itemize}
\end{definition}

\begin{definition}[Minimum-cost circulation problem, implicit in page 1 of~\cite{w07}]\label{def:mccp}
    Given a feasible circulation $f$ as in Definition~\ref{def:circulation}, 
    the minimum-cost flow problem is to find a circulation $f$ that minimizes
    \begin{align*}
        \sum_{(i,j) \in E} w(i,j) \cdot f(i,j).
    \end{align*}
\end{definition}

\begin{lemma}[Equivalence of minimum-cost circulation and minimum-cost flow, Theorem 1 on page 2 of~\cite{w07}]\label{lem:mcc_eq_mcf}
    The minimum-cost flow problem in Definition~\ref{def:mcf} and the minimum-cost circulation problem in Definition~\ref{def:mccp} are equivalent.
\end{lemma}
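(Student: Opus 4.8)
The plan is to establish the equivalence by giving two feasibility- and cost-preserving reductions, one in each direction, between instances of the minimum-cost flow problem (Definition~\ref{def:mcf}) and the minimum-cost circulation problem (Definition~\ref{def:mccp}). Since both are minimization problems over their respective feasible sets, it is enough to show that each instance of one can be transformed into an instance of the other so that feasible solutions are in bijection and their objective values agree up to an additive constant that does not depend on the solution; optimal solutions then correspond, and an algorithm for either problem yields one for the other with only $O(N+M)$ overhead.

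\emph{From min-cost flow to min-cost circulation.} Given a flow instance on $G=(V,E)$ with capacities $c$, costs $w$, and supply/demand values $b(v)$ satisfying $\sum_{v\in V} b(v)=0$, I would introduce one new vertex $r$ and, for each $v$ with $b(v)>0$, an edge $(r,v)$ with lower bound and capacity both equal to $b(v)$ and cost $0$, and for each $v$ with $b(v)<0$, an edge $(v,r)$ with lower bound and capacity both equal to $-b(v)$ and cost $0$; every original edge keeps lower bound $0$, capacity $c(i,j)$, and cost $w(i,j)$. Because each new edge has lower bound equal to its capacity, any circulation on the augmented graph is forced to carry exactly $b(v)$ into (resp.\ out of) each original vertex, so the conservation equation of Definition~\ref{def:circulation} at $v$ becomes precisely the supply/demand constraint $\sum_{k:(v,k)\in E} f(v,k)-\sum_{k:(k,v)\in E} f(k,v)=b(v)$ of Definition~\ref{def:feasible_flow}; conservation at $r$ holds automatically since $\sum_v b(v)=0$. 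The added edges have cost $0$, so the circulation and the corresponding flow have equal cost, giving a cost-preserving bijection.

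\emph{From min-cost circulation to min-cost flow.} Given a circulation instance on $G=(V,E)$ with lower bounds $d$, capacities $c$, and costs $w$, I would apply the shift $f'(i,j):=f(i,j)-d(i,j)$ on every edge, so that $0\le f'(i,j)\le c(i,j)-d(i,j)$. The conservation equation at each vertex $v$ turns into $\sum_{k:(v,k)\in E} f'(v,k)-\sum_{k:(k,v)\in E} f'(k,v)=b(v)$ with $b(v):=\sum_{k:(k,v)\in E} d(k,v)-\sum_{k:(v,k)\in E} d(v,k)$, and summing over $v$ makes every edge's lower bound cancel, so $\sum_{v\in V} b(v)=0$ as Definition~\ref{def:feasible_flow} requires. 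Moreover $\sum_{(i,j)\in E} w(i,j)f(i,j)=\sum_{(i,j)\in E} w(i,j)f'(i,j)+\sum_{(i,j)\in E} w(i,j)d(i,j)$, where the last term is a fixed constant; hence feasible circulations correspond bijectively to feasible flows of the constructed instance, with objectives differing by that constant, and minimizers correspond.

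\emph{Main obstacle.} There is no hard analytic content here; the only care needed is combinatorial bookkeeping: checking that the gadget in the first reduction genuinely saturates the auxiliary edges (immediate once lower bound equals capacity) and that the balance condition $\sum_v b(v)=0$ is preserved by both constructions — in the second reduction this is exactly the telescoping of the edge lower bounds, and in the first it is inherited from the input. Assembling the two reductions yields the claimed equivalence of the two optimization problems.
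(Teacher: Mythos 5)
Your argument is correct, but note that the paper never proves this lemma at all: it is imported verbatim as a known result, with the proof delegated to the cited reference (``Theorem 1 on page 2 of~\cite{w07}''). What you have written is the standard two-way reduction that such references use, so in effect you have filled in the citation rather than diverged from the paper. Both directions check out against the paper's definitions: in the flow-to-circulation direction, the auxiliary vertex $r$ with edges whose lower bound equals their capacity forces exactly $b(v)$ units across each new edge, and the circulation conservation equation at an original vertex $v$ then collapses to the supply/demand equation $\sum_{k:(v,k)\in E} f(v,k)-\sum_{k:(k,v)\in E} f(k,v)=b(v)$ of Definition~\ref{def:feasible_flow}, while conservation at $r$ follows from $\sum_v b(v)=0$; in the circulation-to-flow direction, the shift $f'=f-d$ produces capacities $c-d\ge 0$, induced supplies $b(v)$ that telescope to zero, and an objective differing only by the constant $\sum_{(i,j)\in E} w(i,j)\,d(i,j)$. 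The only point worth stating explicitly is the one you already flag implicitly: equivalence here means a cost-preserving bijection of feasible sets \emph{up to an additive constant independent of the solution}, which is exactly what is needed for optimal solutions (and optimal values, after adding back the constant) to correspond, and it is also all that Theorem~\ref{thm:hard_equivalence_mcf_formal} uses downstream. So your proposal is a correct, self-contained substitute for the external citation; the paper's approach buys brevity, yours buys a verifiable argument matched to the paper's own Definitions~\ref{def:feasible_flow}--\ref{def:mccp}.
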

\section{Missing Proofs in Section~\ref{sec:prob_formulation}} \label{sec:proof_prob}

We first supplement the proof for Proposition~\ref{pro:orig_prob_greedy}. 

\begin{proposition}[Optimal greedy solution for Definition Proposition~\ref{def:orig_problem}, formal version of Proposition~\ref{pro:orig_prob_greedy}]\label{pro:orig_prob_greedy_formal}
There exists a greedy algorithm that, for each paper, selects the co-author with the smallest irresponsibility probability, and this algorithm solves the desk-rejection risk minimization problem in Definition~\ref{def:orig_problem} in $O(\nnz(a))$ time. 
\end{proposition}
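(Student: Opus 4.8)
The plan is to exploit the fact that the integer program in Definition~\ref{def:orig_problem} decouples completely across papers. First I would note that a variable $x_{i,j}$ with $a_{i,j}=0$ appears in the objective with nonnegative coefficient $p_j$ and in no constraint, so some optimal solution takes $x_{i,j}=0$ there; thus, without loss of generality, we may restrict attention to the variables $\{x_{i,j}: j\in S_i\}$, where $S_i:=\{j\in[m]: a_{i,j}=1\}$ is the author set of paper $i$. Under this reduction the objective becomes $\sum_{i=1}^n \sum_{j\in S_i} x_{i,j}p_j$ and the constraints become $\sum_{j\in S_i} x_{i,j}=1$ for each $i\in[n]$, so the program splits into $n$ independent blocks whose variable sets are pairwise disjoint. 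Consequently the global optimum equals the sum of the per-block optima, and any assignment that is simultaneously optimal for every block is globally optimal.

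Next I would solve a single block. For paper $i$, minimizing $\sum_{j\in S_i} x_{i,j}p_j$ over $x_{i,\cdot}\in\{0,1\}^{S_i}$ subject to $\sum_{j\in S_i} x_{i,j}=1$ is exactly the problem of selecting one index of $S_i$; its optimal value is $\min_{j\in S_i} p_j$, attained by setting $x_{i,j^\star_i}=1$ for any $j^\star_i\in\arg\min_{j\in S_i} p_j$ and all other entries of row $i$ to $0$. This is precisely the choice made by the greedy rule, so by the separability argument the greedy assignment attains the global optimum. I would also record the mild standing precondition that $S_i\neq\emptyset$ for every $i\in[n]$ (every paper has at least one author), which is exactly the condition under which the program — and the greedy algorithm — admits a feasible solution.

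Finally, for the running time I would spell out the greedy procedure: iterate over papers $i=1,\dots,n$, and for each one scan the nonzero entries of row $i$ of $a$ exactly once while maintaining the running minimum of $p_j$ together with its index; this costs $O(\nnz(a_{i,*}))$ per paper, hence $O\big(\sum_{i=1}^n \nnz(a_{i,*})\big)=O(\nnz(a))$ in total (initialization and output bookkeeping add $O(n+m)$, which is dominated since feasibility forces $\nnz(a)\ge n$, and authors incident to no paper are irrelevant and can be discarded). There is essentially no obstacle in this proof: the whole argument rests on the clean observation that both objective and constraints are paper-separable, and the only point requiring a little care is to state the feasibility precondition $S_i\neq\emptyset$ explicitly and to check that the $a_{i,j}=0$ variables can be zeroed out at no cost.
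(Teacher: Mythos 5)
Your proposal is correct and follows essentially the same route as the paper's proof: observe that the objective and constraints separate across papers, solve each paper's block by picking a co-author minimizing $p_j$, and bound the running time by a single scan of the nonzeros of each row of $a$, giving $O(\nnz(a))$ overall. Your additional care in zeroing out the $a_{i,j}=0$ variables and stating the feasibility precondition $S_i\neq\emptyset$ only makes explicit details the paper leaves implicit.
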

\begin{proof}
We present the greedy algorithm in Algorithm~\ref{alg:greedy_prob1} and finish the proof in two parts.

{\bf Part 1: Optimality.}  
The objective value for any feasible assignment $x\in\{0,1\}^{n\times m}$ is
$
\sum_{i=1}^n \sum_{j=1}^m x_{i,j} p_j,
$
which is separable across papers. 
Thus, the choice of a reviewer for paper $i$ does not affect the choices for papers $i+1,i+2,\ldots,n$. Minimizing the overall objective therefore reduces to independently minimizing each paper’s contribution. 

For a fixed paper $i \in [n]$, the objective is minimized by selecting any author $j \in [m]$ such that $a_{i,j} = 1$ and $p_j$ is minimized. Algorithm~\ref{alg:greedy_prob1} implements exactly this strategy: it identifies the set of minimizers
\begin{align*}
S_{\min} = \{j \in S_{\mathrm{all}} : p_j = \min_{k \in S_{\mathrm{all}}} p_k\},
\end{align*}
and then chooses one $k \in S_{\min}$, setting $x_{i,k}=1$. Since the problem is separable, independently minimizing each term yields a globally optimal solution.

{\bf Part 2: Time Complexity.}  
For each paper $i$, let $S_{\mathrm{all}} = \{j \in \Z_+ : a_{i,j}=1\}$ be the set of its authors. Constructing this set requires $O(\nnz(a_{i,*}))$ time, and computing $\min_{k \in S_{\mathrm{all}}} p_k$ and $S_{\min}$ also takes $O(\nnz(a_{i,*}))$. Therefore, the total time across all $n$ papers is
\begin{align*}
\sum_{i=1}^n \nnz(a_{i,*}) = \nnz(a).
\end{align*}
Thus, the overall running time is $O(\nnz(a))$.

Combining Part 1 and Part 2, we finish the proof.

\end{proof}

\begin{algorithm}[!ht]\caption{Simple Greedy Algorithm for the Problem in Definition~\ref{def:orig_problem}}\label{alg:greedy_prob1}
\begin{algorithmic}[1]
\Procedure{GreedyAssign1}{$a \in \{0,1\}^{n \times m}$, $p \in (0,1)^{m}$ ,$n,m \in \mathbb{Z}_+$}
    \State $x \gets \boldsymbol{0}_{n\times m}$ 
    \For {$i \in [n]$} \Comment{Iterate all the papers}
        \State $S_{\mathrm{all}} \gets \{j \in \mathbb{Z}_+:a_{i,j}=1\}$ \Comment{A set including all the authors for this paper}
        \State $S_{\min} \gets \{j \in S_{\mathrm{all}}:p_j = \min_{k \in S_{\mathrm{all}}}p_k\}$ \Comment{Keep the most responsible authors}
        \State Randomly choose an element $k$ from $S_{\min}$ 
        \State $x_{i,k} \gets 1$
    \EndFor
    \State \Return $x$
\EndProcedure
\end{algorithmic}
\end{algorithm}

Next, we show the proof for Proposition~\ref{prop:hard_failure}. 

\begin{proposition}[Failure of random and greedy algorithms under hard nomination limits, formal version of Proposition~\ref{prop:hard_failure}]\label{prop:hard_failure_formal}
There exist instances of Definition~\ref{def:hard_problem} that has feasible solutions, 
while both the simple random algorithm (Algorithm~\ref{alg:simple_random}) and the greedy 
algorithm (Algorithm~\ref{alg:simple_greedy}) fail to return a feasible solution.
\end{proposition}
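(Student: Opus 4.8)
The plan is to prove the proposition by exhibiting explicit feasible instances of Definition~\ref{def:hard_problem} on which the two naive heuristics provably produce infeasible assignments. The key observation to exploit is that neither Algorithm~\ref{alg:simple_greedy} nor Algorithm~\ref{alg:simple_random} reserves any capacity of a ``popular'' low-risk author for the papers that have no alternative: the greedy rule commits each paper to its lowest-$p_j$ coauthor (as in Algorithm~\ref{alg:greedy_prob1}), and the random rule picks a coauthor uniformly at random for each paper, in both cases processing papers independently. So the strategy is to build an instance in which the unconstrained optimizer of the per-paper objective points every paper at the \emph{same} author, while feasibility under the cap $b$ forces the nominations to be spread out.

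Concretely, I would take $n$ papers and $m=n+1$ authors, with author $1$ belonging to every paper and carrying $p_1=0$, author $j+1$ being a private coauthor of paper $j$ only with $p_{j+1}=1$, and a cap $b$ with $1\le b<n$. This instance is feasible: nominate author $1$ in paper $1$ and author $j+1$ in paper $j$ for $j\ge 2$; then author $1$ is nominated once $\le b$ and every other author once $\le b$, so all constraints of Definition~\ref{def:hard_problem} hold. This establishes the first half of the claim. For the greedy algorithm, since $p_1=0<1=p_{j+1}$, each paper $j$ selects author $1$, so $\sum_{i=1}^n a_{i,1}x_{i,1}=n>b$, violating the per-author limit; hence the greedy output is infeasible. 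For the random algorithm, each paper independently nominates author $1$ with probability $\tfrac12$, so the number $X$ of papers nominating author $1$ is $\mathrm{Binomial}(n,\tfrac12)$ and $\Pr[X>b]\ge \Pr[X=n]=2^{-n}>0$; on this event the cap is again violated, so with positive probability the random algorithm returns an infeasible assignment. Combining the two cases with the feasibility of the instance yields the proposition.

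I expect the only real subtlety to be matching the counterexample to the precise pseudocode of Algorithm~\ref{alg:simple_random} and Algorithm~\ref{alg:simple_greedy} — in particular their tie-breaking, their paper-processing order, and whether they ignore the cap outright or try to respect it myopically. The construction above already handles the cap-ignoring reading for \emph{any} processing order. To also cover a cap-aware myopic reading, I would additionally present the small instance $n=m=3$, $b=1$, with authorship sets $S_1=\{1,2\}$, $S_2=\{1,3\}$, $S_3=\{1\}$ and $p=(0,\tfrac12,\tfrac12)$: its unique feasible assignment is paper $3\to$ author $1$, paper $1\to$ author $2$, paper $2\to$ author $3$, yet a myopic cap-aware greedy scan in index order sends paper $1$ to author $1$, saturating the cap, and then gets stuck at paper $3$, whose only coauthor is author $1$; likewise the random algorithm gets stuck at paper $3$ whenever paper $1$ (or paper $2$) has already taken author $1$, an event of probability at least $\tfrac12$. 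I would phrase the final writeup so that whichever reading of ``simple greedy/random'' the reader adopts, one of these two instances applies.
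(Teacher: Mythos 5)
Your proposal is correct, but only its second, ``cap-aware'' instance is the one that actually matters: the paper's Algorithms~\ref{alg:simple_random} and~\ref{alg:simple_greedy} do restrict each paper's choice to the set $S_{<b}$ of authors still under the limit and only raise an error when $S_{<b}=\emptyset$, so your first construction (author $1$ in every paper, each paper with a private backup author) is \emph{not} a counterexample for them --- once author $1$ saturates the cap, both algorithms simply switch to the private coauthor and terminate feasibly. Your fallback instance ($n=m=3$, $b=1$, $S_1=\{1,2\}$, $S_2=\{1,3\}$, $S_3=\{1\}$, $p=(0,\tfrac12,\tfrac12)$) does defeat both algorithms and is structurally the same idea as the paper's own proof, which uses the even smaller feasible instance $n=m=2$, $b=1$, $a=\begin{bmatrix}1&1\\1&0\end{bmatrix}$: a paper whose only coauthor is the popular author gets processed after that author's capacity has been myopically exhausted, triggering the error branch. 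One small difference in your favor: by taking $p_1=0$ strictly smaller than the other probabilities, your greedy failure is deterministic, whereas the paper's greedy instance has $p_1=p_2$ and fails only under the tie-breaking choice (probability $\tfrac12$), mirroring the ``suppose author $1$ is selected'' step it also uses for the random algorithm; both readings of ``fail'' (failure with positive probability) match the paper's own standard. In a final write-up you should drop the cap-ignoring instance and keep only the cap-aware one, stating explicitly the feasible assignment (paper $3\to$ author $1$, paper $1\to$ author $2$, paper $2\to$ author $3$) to certify the first half of the claim, which the paper itself leaves implicit.
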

\begin{proof}
   Let's consider the following example: $n=2, m=2, a = \begin{bmatrix}
        1 & 1 \\
        1 & 0
    \end{bmatrix}, b=1$. 
    
    \noindent{\bf Part 1. Random Algorithm.}  
    In Algorithm~\ref{alg:simple_random}, we first enter the for-loop with $i=1$. 
    At line~\ref{line:simple_random_set}, we get $S_{\mathrm{all}}=\{1,2\}$, and at line~\ref{line:simple_random_lessb_set} we get $S_{<b}=\{1,2\}$. Suppose the algorithm randomly selects author $1$ at line~\ref{line:simple_random_valid}. 
    
    Next, we enter the for-loop with $i=2$. At this point, we have $S_{\mathrm{all}}=\{1\}$ at line~\ref{line:simple_random_set}. However, author $1$ already has one nomination, so 
    $\sum_{i=1}^2 a_{i,1}x_{i,1} + 1 = 2 > b$,
    which makes $S_{<b} = \emptyset$ at line~\ref{line:simple_random_lessb_set}, i.e., $|S_{<b}| =0$. This triggers the if-branch at line~\ref{line:simple_random_err_true}, setting the error flag to true.  
    
    Thus, the random algorithm cannot produce a valid solution.

    {\bf Part 2. Greedy Algorithm.}  
    In Algorithm~\ref{alg:simple_greedy}, we first enter the for-loop with $i=1$. At line~\ref{line:simple_greedy_all}, we get $S_{\mathrm{all}}=\{1,2\}$, and at line~\ref{line:simple_greedy_lessb}, we get $S_{<b}=\{1,2\}$. Since both authors are valid and we have $p_1 = p_2$, line~\ref{line:simple_greedy_max} returns two equivalently ``most responsible" authors, i.e., $S_{\mathrm{max}}=\{1,2\}$. Suppose line~\ref{line:simple_greedy_valid} then randomly selects author $1$. 
    
    Next, we enter the for-loop with $i=2$. Now $S_{\mathrm{all}}=\{1\}$ at line~\ref{line:simple_greedy_all}, but author $1$ already has one nomination. Hence 
    $\sum_{i=1}^2 a_{i,1}x_{i,1} + 1 = 2 > b$,
    which makes $S_{<b} = \emptyset$ at line~\ref{line:simple_greedy_lessb}, triggering the if-branch at line~\ref{line:simple_greedy_if_err} and setting the error flag to true.
    
    Thus, the greedy algorithm also fails to produce a valid solution.

    Finally, by combining Parts 1-2 of the proof, we can finish the proof. 
\end{proof}

\begin{algorithm}[!ht]\caption{Simple Random Algorithm for Desk-Rejection Risk Minimization with Hard Author Nomination Limit Problem in Definition~\ref{def:hard_problem}}\label{alg:simple_random}
\begin{algorithmic}[1]
\Procedure{RandAssignHard}{$a \in \{0,1\}^{n \times m}$, $n,m,b \in \mathbb{Z}_+$}
    \State $x \gets \boldsymbol{0}_{n\times m}$
    \State $\mathrm{err} \gets \mathrm{false}$ \Comment{Error flag}
    \For {$i \in [n]$} \Comment{Iterate all the papers}
        \State $S_{\mathrm{all}} \gets \{j \in [m]:a_{i,j}=1\}$ \label{line:simple_random_set} \Comment{A set including all the authors for this paper}
        \State $S_{<b} \gets \{j \in S_{\mathrm{
        all}}:\sum_{i=1}^na_{i,k}x_{i,k} +1 \leq b\}$ \label{line:simple_random_lessb_set} \Comment{Authors nominated by less than $b$ times}
        \If {$|S_{<b}| = 0$} \Comment{Cannot obtain a valid choice of author}
            \State Randomly choose an element $k$ from $S_{\mathrm{all}}$ 
            \State $\mathrm{err} \gets \mathrm{true}$ \label{line:simple_random_err_true}
        \Else \Comment{Has a valid choice of author}
            \State Randomly choose an element $k$ from $S_{<b}$  \label{line:simple_random_valid}
        \EndIf
        \State $x_{i,k} \gets 1$
    \EndFor
    \State \Return $x,\mathrm{err}$
\EndProcedure
\end{algorithmic}
\end{algorithm}

\begin{algorithm}\caption{Simple Greedy Algorithm for Desk-Rejection Risk Minimization with Hard Author Nomination Limit Problem in Definition~\ref{def:hard_problem}}\label{alg:simple_greedy}
\begin{algorithmic}[1]
\Procedure{GreedyAssignHard}{$a \in \{0,1\}^{n \times m}$, $n,m,b \in \mathbb{Z}_+$, $p \in (0,1)^m$}
    \State $x \gets \boldsymbol{0}_{n\times m}$
    \State $\mathrm{err} \gets \mathrm{false}$ \Comment{Error flag}
    \For {$i \in [n]$} \Comment{Iterate all the papers}
        \State $S_{\mathrm{all}} \gets \{j \in [m]:a_{i,j}=1\}$ \label{line:simple_greedy_all} \Comment{A set including all the authors for this paper}
        \State $S_{<b} \gets \{j \in S_{\mathrm{
        all}}:\sum_{i=1}^na_{i,k}x_{i,k} +1 \leq b\}$ \label{line:simple_greedy_lessb} \Comment{Authors nominated by less than $b$ times}
        \If {$|S_{<b}| = 0$} \Comment{Cannot obtain a valid choice of author}
            \State Randomly choose an element $k$ from $S_{\mathrm{all}}$ 
            \State $\mathrm{err} \gets \mathrm{true}$ \label{line:simple_greedy_if_err}
        \Else \Comment{Has a valid choice of author}
            \State $S_{\min} \gets \{j \in S_{\mathrm{all}}:p_j = \min_{k \in S_{<b}}p_k\}$ \label{line:simple_greedy_max} \Comment{Keep the most responsible authors}
            \State Randomly choose an element $k$ from $S_{\mathrm{max}}$ \label{line:simple_greedy_valid}
        \EndIf
        \State $x_{i,k} \gets 1$
    \EndFor
    \State \Return $x,\mathrm{err}$
\EndProcedure
\end{algorithmic}
\end{algorithm}

\section{Missing Proofs in Section~\ref{sec:proposed}} \label{sec:proof_main}

We begin by showing the proof for Proposition~\ref{prop:hard_prob_lp_frac}. 

\begin{proposition}[Existence of fractional solution in relaxed hard author nomination limit problem, formal version of Proposition~\ref{prop:hard_prob_lp_frac}]\label{prop:hard_prob_lp_frac_formal}
     There exists a global minimum $x$ of the problem in Definition~\ref{def:hard_author_limit_lp} such that $\exists i \in [n], j \in [m], x_{i,j} \notin \{0, 1\}$. 
\end{proposition}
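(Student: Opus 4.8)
The plan is to prove this by exhibiting an explicit, tiny instance of Definition~\ref{def:hard_author_limit_lp} together with a fractional point that is simultaneously feasible and a global minimum. The cleanest way to get a fractional optimum is to introduce degeneracy: choose the irresponsibility probabilities so that \emph{every} feasible assignment attains the same objective value, which immediately makes any fractional feasible point optimal. Concretely, I would take $n = 2$, $m = 2$, $a = \begin{bmatrix} 1 & 1 \\ 1 & 1 \end{bmatrix}$, $p_1 = p_2 = 1/2$, and $b = 1$, and then exhibit the matrix $x$ with all four entries equal to $1/2$.

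The verification proceeds in three short steps. First, feasibility: each row sum is $x_{i,1} + x_{i,2} = 1$, so the per-paper equality constraint $\sum_{j} a_{i,j} x_{i,j} = 1$ holds; each column sum is $x_{1,j} + x_{2,j} = 1 \le b$, so the hard nomination limit holds; and all entries lie in $[0,1]$. Second, a uniform lower bound on the objective: for \emph{any} feasible $x$, the per-paper equality forces $\sum_{j} x_{i,j} = \sum_{j} a_{i,j} x_{i,j} = 1$ for each $i$, hence $\sum_{i,j} x_{i,j} p_j = (1/2)\sum_{i,j} x_{i,j} = (1/2)\cdot 2 = 1$. Third, the conclusion: since every feasible point has objective value exactly $1$, the all-$1/2$ matrix is a global minimum, and it has $x_{i,j} = 1/2 \notin \{0,1\}$ for all $i,j$, which establishes the proposition. (If one is willing to let the limit be inactive, the even smaller instance $n=1$, $m=2$, $a = [1,1]$, $p = (1/2,1/2)$ works identically; I would keep $b=1$ active in the stated instance so the example is not vacuous, noting that $b \ge 2 = n$ would reduce it to Definition~\ref{def:orig_problem}.)

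I would also add a remark explaining why the statement is necessarily phrased as the \emph{existence} of a fractional optimum rather than something stronger: the constraint matrix of Definition~\ref{def:hard_author_limit_lp} has the bipartite node--arc incidence structure (one equality row per paper) plus one capacity row per author, which is totally unimodular, so for integral $b$ the LP always possesses an integral optimal vertex. Hence no instance can have \emph{only} fractional optima, and the best one can do is force a whole face of optima via a tie among the $p_j$'s — which is exactly what the construction above does. The main obstacle here is essentially nonexistent: once the degenerate instance is chosen, the argument is a one-line feasibility check plus a one-line computation that the objective is constant on the feasible region; the only care needed is to pick an instance where the fractional optimum is transparent and the limit $b$ is genuinely active.
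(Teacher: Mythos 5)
Your proposal is correct and follows essentially the same route as the paper: the paper's proof also uses the $n=m=2$, $b=1$, all-ones authorship matrix with equal probabilities (there $p_j = 1/6$) so that the objective is constant on the feasible region, making any fractional feasible point such as the all-$1/2$ matrix a global minimum. Your feasibility check and constancy computation are sound, so the argument goes through.
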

\begin{proof}
    To establish existence, it suffices to construct an example that yields a fractional optimal solution. 
    Consider the case where $n = m = 2$, $b = 1$, 
    $
    a = \begin{bmatrix}
        1 & 1 \\
        1 & 1
    \end{bmatrix}
    $, and 
    $
    p = \begin{bmatrix}
        1/6 \\ 
        1/6
    \end{bmatrix}
    $.
    
    The corresponding linear program for Definition~\ref{def:hard_author_limit_lp} is:
    \begin{align*}
        \min_{x} ~ & \sum_{i=1}^2 \sum_{j=1}^2 \frac{1}{6}x_{i,j} \\
        \mathrm{~s.t.~} 
        & \sum_{j=1}^2 x_{i,j} = 1, \quad \forall i \in [2], \\
        & 0 \leq x_{i,j} \leq 1, \quad \forall i \in [2], j \in [2], \\ 
        & \sum_{i=1}^2 x_{i,j} \leq 1, \quad \forall j \in [2]. 
    \end{align*}

    This program is equivalent to:
    \begin{align*}
        \min_{x} ~ & \frac{1}{6}\cdot (x_{1,1} + x_{1,2} + x_{2,1} + x_{2,2}) \\ 
        \mathrm{~s.t.~} 
        & x_{1,1} + x_{1,2} = 1, \\ 
        & x_{2,1} + x_{2,2} = 1, \\
        & 0 \leq x_{i,j} \leq 1, \quad \forall i \in [2], j \in [2], \\ 
        & x_{1,1} + x_{2,1} \le 1, \\  
        & x_{1,2} + x_{2,2} \le 1.  
    \end{align*}

    Eliminating $x_{1,2}$ and $x_{2,2}$ using the equality constraints, we have:
    \begin{align*}
        \min_{x_{1,1}, x_{2,1}} ~ & 0 \\ 
        \mathrm{~s.t.~} 
        & 0 \leq x_{1,1} \leq 1, \\ 
        & 0 \leq x_{2,1} \leq 1, \\ 
        & x_{1,1} + x_{2,1} \le 1, \\  
        & (1 - x_{1,1}) + (1 - x_{2,1}) \le 1. 
    \end{align*}

    Since combining the last two constraints gives $x_{1,1} + x_{2,1} = 1$ and the objective function is a constant, we can conclude that any solution of the form
    \begin{align*}
    \begin{bmatrix}
        x_{1,1} & 1 - x_{1,1} \\  
        1 - x_{1,1} & x_{1,1}
    \end{bmatrix}
    , \forall x_{1,1} \in [0,1]
    \end{align*}
    is a global minimum of the original problem. 
    In particular, any choice with $x_{1,1} \in (0,1)$ yields a fractional solution such that $\exists i \in [n], j \in [m], x_{i,j} \notin \{0, 1\}$. 

    This completes the proof.
\end{proof}

Next, we show the proof for Theorem~\ref{thm:convex_lp_equi}.

\begin{theorem}[Equivalence to minimum-cost flow, formal version of Theorem~\ref{thm:hard_equivalence_mcf}] \label{thm:hard_equivalence_mcf_formal}
    The hard author nomination problem in Definition~\ref{def:hard_problem} is a minimum-cost flow problem in Definition~\ref{def:mcf}.
\end{theorem}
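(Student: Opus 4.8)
The goal is to exhibit an explicit minimum-cost flow network whose feasible flows are in bijection with feasible assignments $x$ of Definition~\ref{def:hard_problem}, and under which the flow cost equals the objective $\sum_{i,j} x_{i,j} p_j$. I will build a bipartite-style network with a source $s$, a sink $t$, one node $u_j$ for each author $j \in [m]$, and one node $v_i$ for each paper $i \in [n]$. The edges are: (i) a source edge $(s, u_j)$ for each author with capacity $b$ and cost $0$; (ii) an author-paper edge $(u_j, v_i)$ whenever $a_{i,j} = 1$, with capacity $1$ and cost $p_j$; (iii) a paper-sink edge $(v_i, t)$ for each paper with capacity $1$ and cost $0$; and (iv) optionally a return edge $(t,s)$ with capacity $n$ and cost $0$ if one prefers the circulation formulation. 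To pin down the flow value, set the supply/demand vector $b(\cdot)$ of Definition~\ref{def:feasible_flow} to $b(s) = n$, $b(t) = -n$, and $b(w) = 0$ for all author and paper nodes; this forces exactly $n$ units to traverse the network, one per paper.

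\textbf{Key steps.} First I would formally specify the graph $G = (V,E)$ with $V = \{s,t\} \cup \{u_j\}_{j\in[m]} \cup \{v_i\}_{i\in[n]}$ and the capacities, costs, and supplies as above, and check that $\sum_{v} b(v) = 0$ as required. Second, given a feasible assignment $x$, define a flow $f$ by $f(u_j, v_i) = x_{i,j}$ on author-paper edges, $f(v_i,t) = \sum_j a_{i,j} x_{i,j} = 1$ on paper-sink edges (using the per-paper constraint), and $f(s,u_j) = \sum_i a_{i,j} x_{i,j} \le b$ on source edges (using the hard nomination constraint); then verify flow conservation at every node and that all capacity bounds hold — this is routine bookkeeping. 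Third, conversely, given any feasible integral flow $f$, set $x_{i,j} := f(u_j,v_i)$; because paper-sink edges have capacity $1$ and node $v_i$ has zero supply, conservation at $v_i$ forces $\sum_{j : a_{i,j}=1} f(u_j,v_i) = 1$, recovering the per-paper constraint, and conservation at $u_j$ with the capacity-$b$ source edge recovers the nomination limit. Fourth, observe that the only edges with nonzero cost are the author-paper edges, so the flow cost is $\sum_{j,i : a_{i,j}=1} p_j f(u_j,v_i) = \sum_{i,j} p_j x_{i,j}$, matching the objective exactly; hence minimizing cost over feasible flows is the same optimization as Definition~\ref{def:hard_problem}. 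Finally, invoke the standard integrality theorem for minimum-cost flow with integral capacities and supplies (citing the flow background in Section~\ref{sec:flow}) to conclude that an optimal integral flow — hence an optimal integral assignment — exists whenever the feasible set is nonempty.

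\textbf{Main obstacle.} The construction itself is routine; the only genuinely delicate point is the direction ``flow $\Rightarrow$ assignment,'' where I must argue that a feasible flow never sends fractional or spurious flow along an edge $(u_j, v_i)$ with $a_{i,j} = 0$ (handled by simply not including such edges in $E$) and that the per-paper equality $\sum_j a_{i,j} x_{i,j} = 1$ is forced rather than merely $\le 1$. The latter follows because the total supply $n$ must all reach $t$, and the only way into $t$ is through the $n$ unit-capacity edges $(v_i,t)$, so each must carry exactly one unit, which by conservation at $v_i$ forces exactly one unit of inflow; I would state this explicitly. A secondary subtlety is bridging Definition~\ref{def:mcf} (which is phrased in terms of a supply/demand vector) versus the circulation form of Definition~\ref{def:mccp}: by Lemma~\ref{lem:mcc_eq_mcf} these are equivalent, so I can present whichever is cleaner and note the equivalence. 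I would also remark on the degenerate/infeasible case (already flagged in the main text) — if the network admits no feasible flow, the assignment problem is infeasible too, consistent with the ``whenever a feasible assignment exists'' qualifier.
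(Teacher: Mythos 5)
Your proposal is correct and takes essentially the same approach as the paper: the identical source--author--paper--sink network with capacity-$b$ source edges of cost $0$, unit-capacity author--paper edges of cost $p_j$ for $a_{i,j}=1$, and zero-cost paper--sink edges, with the flow on author--paper edges recovering $x$. The only cosmetic difference is that the paper phrases it as a circulation (lower bound $d=1$ on the paper--sink edges plus a return edge, then invokes Lemma~\ref{lem:mcc_eq_mcf}), whereas you use the supply/demand form with $b(s)=n$, $b(t)=-n$ and force the per-paper equality via saturation of the unit-capacity sink edges; your verification of both directions of the correspondence is in fact more explicit than the paper's.
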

\begin{proof}
    We construct a network $G=(V,E)$ as follows.  
    The vertex set is $V = [m+n+2]$, consisting of
    \begin{itemize}
        \item Source vertex $1$,
        \item Target vertex $2$,
        \item Author vertices $\{3,4,\ldots,m+2\}$,
        \item Paper vertices $\{m+3,m+4,\ldots,m+n+2\}$.
    \end{itemize}

    The edge set $E$ is defined as:
    \begin{itemize}
        \item For each author $j \in [m]$, add an edge $(1,j+2)$ with $d(1,j+2)=0$, $c(1,j+2)=b$, and $w(1,j+2)=0$, encoding that each author can be nominated at most $b$ times.
        \item For each $(i,j)$ with $i \in [n], j \in [m]$ and $a_{i,j}=1$, add an edge $(j+2,i+m+2)$ with $d(j+2,i+m+2)=0$, $c(j+2,i+m+2)=1$, and $w(j+2,i+m+2)=p_j$, encoding the cost of nominating author $j$ for paper $i$.
        \item For each paper $i \in [n]$, add an edge $(i+m+2,2)$ with $d(i+m+2,2)=c(i+m+2,2)=1$ and $w(i+m+2,2)=0$, enforcing that each paper must nominate exactly one reviewer.
        \item Add an edge $(2,1)$ with $d(2,1)=0$, $c(2,1)=n$, and $w(2,1)=0$, ensuring circulation of total flow.
    \end{itemize}

    Let $f$ be an optimal circulation of this network.  
    Then the assignment matrix $x \in \{0,1\}^{n\times m}$ is recovered from the flow on author-paper edges as
    \begin{align*}
        x_{i,j} = f(j+2,i+m+2), \quad \forall i \in [n], j \in [m].
    \end{align*}

    This shows that the hard author nomination problem is equivalent to a minimum-cost circulation problem.  
    By Lemma~\ref{lem:mcc_eq_mcf}, the minimum-cost circulation and minimum-cost flow problems are equivalent, and thus the hard nomination limit problem is a minimum-cost flow problem.

    Thus, we finish the proof.
\end{proof}

Then, we show the proof for the linear program re-formulation in Theorem~\ref{thm:convex_lp_equi}. 

\begin{theorem}[Equivalence of the convex and linear program formulations for the soft nomination limit problem, formal version of Theorem~\ref{thm:convex_lp_equi}]\label{thm:convex_lp_equi_formal}
    The optimal solution $x_{\OPT, 1}$ of the problem in Definition~\ref{def:soft_prob_convex} and the optimal solution $(x_{\OPT, 2}, y_{\OPT})$ of the problem in Definition~\ref{def:soft_prob_lp} is the same, i.e., $x_{\OPT, 1} = x_{\OPT, 2}$.
\end{theorem}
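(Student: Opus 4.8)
The plan is to exploit the classical epigraph reformulation: for a fixed assignment $x$ satisfying the common per-paper constraint $\sum_{j=1}^m a_{i,j} x_{i,j} = 1$, the auxiliary variables $y_j$ in Definition~\ref{def:soft_prob_lp} decouple across $j$, and minimizing $\lambda \sum_{j=1}^m y_j$ over $y \in \R_+^m$ subject to $y_j \ge \sum_{i=1}^n a_{i,j} x_{i,j} - b$ is solved coordinatewise by $y_j^\star(x) := \max\{0, \sum_{i=1}^n a_{i,j} x_{i,j} - b\}$, since $\lambda > 0$ forces each $y_j$ down to its lower envelope. Substituting $y^\star(x)$ back shows that the partially minimized LP objective coincides pointwise (in $x$) with the objective of Definition~\ref{def:soft_prob_convex}, and the two problems share the same feasible region in $x$ (the per-paper equalities together with the box $[0,1]^{n\times m}$).

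First I would establish that pointwise identity — a one-line computation: minimizing the linear function $\lambda\sum_{j=1}^m y_j$ over the shifted nonnegative orthant $\{y : y_j \ge \max\{0,\sum_i a_{i,j}x_{i,j}-b\}\}$ puts each $y_j$ at its lower bound. Then $\min_{x,y} F_{\mathrm{LP}}(x,y) = \min_x \min_y F_{\mathrm{LP}}(x,y) = \min_x F_{\mathrm{conv}}(x)$, so the two formulations have equal optimal value. Next I would argue the correspondence of minimizers both ways. Given an optimal $(x_{\OPT,2}, y_{\OPT})$ for the LP, optimality forces $y_{\OPT,j} = y_j^\star(x_{\OPT,2})$ for every $j$: if some $y_{\OPT,j}$ strictly exceeded $\max\{0,\sum_i a_{i,j}x_{\OPT,2,i,j}-b\}$, lowering it to that bound keeps feasibility and strictly decreases $\lambda\sum_j y_j$, a contradiction. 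Hence $F_{\mathrm{conv}}(x_{\OPT,2}) = F_{\mathrm{LP}}(x_{\OPT,2},y_{\OPT})$ equals the common optimal value, so $x_{\OPT,2}$ minimizes $F_{\mathrm{conv}}$. Conversely, for any minimizer $x_{\OPT,1}$ of $F_{\mathrm{conv}}$, the pair $(x_{\OPT,1}, y^\star(x_{\OPT,1}))$ is LP-feasible and attains the optimal value, hence is LP-optimal. Together these yield that $x$ is optimal for Definition~\ref{def:soft_prob_convex} if and only if it is optimal for Definition~\ref{def:soft_prob_lp}.

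The main obstacle is really a matter of care rather than difficulty: the literal reading "$x_{\OPT,1} = x_{\OPT,2}$" presumes uniqueness, but neither problem need have a unique optimal $x$ (the LP instance in Proposition~\ref{prop:hard_prob_lp_frac_formal} already exhibits a continuum of optima for a closely related program). I would therefore phrase the conclusion at the level of optimal sets, namely $\arg\min_x F_{\mathrm{conv}} = \{x : (x,y)\text{ is LP-optimal for some }y\}$, and record that the $y$-coordinate of any LP optimum is pinned to $y_j = \max\{0,\sum_i a_{i,j}x_{i,j}-b\}$, so the correspondence between the two optimal solution sets is a bijection once $x$ is fixed. Everything else — feasibility checks, the coordinatewise minimization — is immediate, and no further computation is needed.
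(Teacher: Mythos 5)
Your proposal is correct and follows essentially the same route as the paper's proof: pin each $y_j$ to $\max\{0,\sum_{i=1}^n a_{i,j}x_{i,j}-b\}$ at optimality (the paper does this by a two-case argument, you by coordinatewise partial minimization) and substitute back to recover the objective of Definition~\ref{def:soft_prob_convex}. Your additional care in arguing both directions of the correspondence and in restating the conclusion at the level of optimal sets—since neither problem need have a unique minimizer—is a genuine improvement in rigor over the paper's one-directional, uniqueness-presuming phrasing, but it is a refinement of the same argument rather than a different approach.
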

\begin{proof} 
Fix any $j \in [m]$. Consider the global minimum $y_{\OPT, j}$ in Definition~\ref{def:soft_prob_lp}$:$

{\bf Case 1.} If $\sum_{i=1}^n a_{i,j}x_{\OPT,2,i,j} - b \ge 0$, the constraint $y_j \ge \sum_{i=1}^n a_{i,j}x_{\OPT,2,i,j} - b$ is active. To minimize the objective, $y_{\OPT, j}$ must equal this lower bound. Thus, we have
\begin{align*}
y_{\OPT,j} 
= & ~ \sum_{i=1}^n a_{i,j}x_{\OPT,2,i,j} - b \\
= & ~ \max\{0, \sum_{i=1}^n a_{i,j}x_{\OPT,2,i,j} - b\}.
\end{align*}

{\bf Case 2.} If $\sum_{i=1}^n a_{i,j}x_{\OPT,2,i,j} - b < 0$, then the smallest feasible value for $y_{\OPT,j}$ is 
\begin{align*}
y_{\OPT,j} 
= & ~ 0 \\
= & ~ \max\{0, \sum_{i=1}^n a_{i,j}x_{\OPT,2,i,j} - b\}.
\end{align*}

Therefore, at any global minimum of Definition~\ref{def:soft_prob_lp}, we have 
\begin{align*}
y_{\OPT,j} 
= & ~ \max\{0, \sum_{i=1}^n a_{i,j}x_{\OPT,2,i,j} - b\}, ~ \forall j \in [m].
\end{align*}

Substituting $y_{\OPT,j}, \forall j \in [m]$ back into the objective of Definition~\ref{def:soft_prob_convex}, we can eliminate the variable $y$ and result in the same problem as Definition~\ref{def:soft_prob_convex}. Thus, we finish the proof. 
\end{proof}

\section{Baseline Algorithms for the Soft Author Nomination Limit Problem} \label{sec:soft_baselines}

In this section, we present two useful baselines for the soft author nomination limit problem described in Section~\ref{sec:soft_prob}. 

\begin{algorithm}[!ht]\caption{Simple Random Algorithm for Desk-Rejection Risk Minimization with Soft Author Nomination Limit Problem in Definition~\ref{def:soft_prob_lp}}\label{alg:random_prob3}
\begin{algorithmic}[1]
\Procedure{RandAssignSoft}{$a \in \{0,1\}^{n \times m}$, $n,m,b \in \mathbb{Z}_+$}
    \State $x \gets \boldsymbol{0}_{n\times m}$
    \For {$i \in [n]$} \Comment{Iterate all the papers}
        \State $S_{\mathrm{all}} \gets \{j \in [m]:a_{i,j}=1\}$ \Comment{A set including all the authors for this paper}
        \State $S_{<b} \gets \{j \in S_{\mathrm{
        all}}:\sum_{i=1}^na_{i,k}x_{i,k} +1 \leq b\}$  \Comment{Authors nominated by less than $b$ times}
        \If {$|S_{<b}| = 0$} \Comment{Increase the penalty term}
            \State Randomly choose an element $k$ from $S_{\mathrm{all}}$ 
        \Else \Comment{No need to increase the penalty term}
            \State Randomly choose an element $k$ from $S_{<b}$  
        \EndIf
        \State $x_{i,k} \gets 1$
    \EndFor
    \State \Return $x$
\EndProcedure
\end{algorithmic}
\end{algorithm}

\begin{algorithm}
\caption{Simple Greedy Algorithm for Desk-Rejection Risk Minimization with Soft Author Nomination Limit (Definition~\ref{def:soft_prob_lp})}
\label{alg:greedy_prob3}
\begin{algorithmic}[1]
\Procedure{GreedyAssignSoft}{$a \in \{0,1\}^{n \times m}$, $n,m,b \in \mathbb{Z}_+$, $p \in (0,1)^m$}
    \State $x \gets \boldsymbol{0}_{n\times m}$
    \For {$i \in [n]$} \Comment{Iterate over all papers}
        \State $S_{\mathrm{all}} \gets \{j \in [m]: a_{i,j}=1\}$ \Comment{A set including all the authors for this paper}
        \State Define $h(j) = p_j + \lambda\max\{0, \sum_{i=1}^n a_{i,j}x_{i,j}+1-b\}$ \Comment{Cost increase if we assign to author $j$}
        \State $S_{\min} \gets \{j \in S_{\mathrm{all}}:h(j) = \min_{k \in S_{\mathrm{all}}}h(k)\}$ \Comment{Authors with minimal cost increase}
        \State Randomly choose an element $k$ from $S_{\min}$
        \State $x_{i,k} \gets 1$
    \EndFor
    \State \Return $x$
\EndProcedure
\end{algorithmic}
\end{algorithm}

\else
\bibliography{ref}
\fi


\ifdefined\isarxiv

\clearpage
\bibliographystyle{alpha}
\bibliography{ref}

\else

\clearpage
\appendix
\thispagestyle{empty}
\onecolumn

\begin{center}
    \textbf{\LARGE Appendix }
\end{center}

\fi

\end{document}